\DeclareFontFamily{U}{mathx}{\hyphenchar\font45}
\DeclareFontShape{U}{mathx}{m}{n}{
      <5> <6> <7> <8> <9> <10>
      <10.95> <12> <14.4> <17.28> <20.74> <24.88>
      mathx10
      }{}
\DeclareSymbolFont{mathx}{U}{mathx}{m}{n}
\DeclareMathSymbol{\bigtimes}{1}{mathx}{"91}
\definecolor{DarkRed}{rgb}{0.5,0.1,0.1}
\definecolor{DarkBlue}{rgb}{0.1,0.1,0.5}
\definecolor{ForestGreen}{rgb}{0.1333,0.5451,0.1333}
\definecolor{Red}{rgb}{0.9,0,0}
\crefname{property}{property}{Property}
\crefname{equation}{eq}{Eq}
\def\BState{\State\hskip-\ALG@thistlm}
\setlist[itemize]{leftmargin=20pt}
\setlist[enumerate]{leftmargin=20pt}
\newtheorem{theorem}{Theorem}
\newtheorem{lemma}{Lemma}[section]
\newtheorem{proposition}[lemma]{Proposition}
\newtheorem{claim}[lemma]{Claim}
\newtheorem{fact}[lemma]{Fact}
\newtheorem{definition}[lemma]{Definition}
\newtheorem{problem}{Problem}
\newtheorem*{claim*}{Claim}
\newtheorem*{assumption*}{Assumption}
\newtheorem*{proposition*}{Proposition}
\newtheorem*{lemma*}{Lemma}
\newtheorem*{theorem*}{Theorem}
\crefname{lemma}{Lemma}{Lemmas}
\crefname{claim}{claim}{claims}
\crefname{property}{Property}{Properties}
\crefname{invariant}{Invariant}{Invariants}
\newtheorem{mdresult}{Result}
\newenvironment{result}{\begin{mdframed}[backgroundcolor=lightgray!40,topline=false,rightline=false,leftline=false,bottomline=false,innertopmargin=2pt]\begin{mdresult}}{\end{mdresult}\end{mdframed}}
\theoremstyle{definition}
\newenvironment{Problem}{\begin{mdframed}[backgroundcolor=ForestGreen!15,topline=false,bottomline=false, innerbottommargin=12pt,innertopmargin=12pt]\begin{problem}}{\end{problem}\end{mdframed}}
\newtheorem*{mdproblem*}{Problem}
\newenvironment{Problem*}{\begin{mdframed}[hidealllines=false,innerleftmargin=10pt,backgroundcolor=gray!10,innertopmargin=5pt,innerbottommargin=5pt,roundcorner=10pt]\begin{mdproblem*}}{\end{mdproblem*}\end{mdframed}}
\newtheorem{mddefinition}[lemma]{Definition}
\newtheorem*{mddefinition*}{Definition}
\newenvironment{Definition*}{\begin{mdframed}[hidealllines=false,innerleftmargin=10pt,backgroundcolor=white!10,innertopmargin=5pt,innerbottommargin=5pt,roundcorner=10pt]\begin{mddefinition*}}{\end{mddefinition*}\end{mdframed}}
\newtheorem{mdremark}{Remark}
\newenvironment{ourbox}{\begin{mdframed}[hidealllines=false,innerleftmargin=10pt,backgroundcolor=white!10,innertopmargin=2pt,innerbottommargin=5pt,roundcorner=10pt]}{\end{mdframed}}
\newtheorem{mdalgorithm}{Algorithm}
\newenvironment{Algorithm}{\begin{ourbox}\begin{mdalgorithm}}{\end{mdalgorithm}\end{ourbox}}
\DeclareMathOperator*{\argmax}{arg\,max}
\renewcommand{\qed}{\nobreak \ifvmode \relax \else
      \ifdim\lastskip<1.5em \hskip-\lastskip
      \hskip1.5em plus0em minus0.5em \fi \nobreak
      \vrule height0.75em width0.5em depth0.25em\fi}
\renewcommand{\leq}{\leqslant}
\renewcommand{\geq}{\geqslant}
\renewcommand{\ge}{\geq}
\newcommand{\rs}{{{Ruzsa-Szemerédi}}\xspace}
\newcommand{\Leq}[1]{\ensuremath{\underset{\textnormal{#1}}\leq}}
\newcommand{\Ot}{\ensuremath{\widetilde{O}}}
\newcommand{\eps}{\ensuremath{\varepsilon}}
\newcommand{\Paren}[1]{\Big(#1\Big)}
\newcommand{\bracket}[1]{\left[#1\right]}
\newcommand{\paren}[1]{\ensuremath{\left(#1\right)}\xspace}
\newcommand{\card}[1]{\left\vert{#1}\right\vert}
\newcommand{\expect}[1]{\Exp\bracket{#1}}
\newcommand{\set}[1]{\ensuremath{\left\{ #1 \right\}}}
\newcommand{\poly}{\mbox{\rm poly}}
\newcommand{\alg}{\ensuremath{\mathcal{A}}\xspace}
\DeclareMathOperator*{\Exp}{\ensuremath{{\mathbb{E}}}}
\DeclareMathOperator*{\Prob}{\ensuremath{\textnormal{Pr}}}
\renewcommand{\Pr}{\Prob}
\newenvironment{tbox}{\begin{tcolorbox}[
		enlarge top by=5pt,
		enlarge bottom by=5pt,
		 breakable,
		 boxsep=0pt,
                  left=4pt,
                  right=4pt,
                  top=10pt,
                  arc=0pt,
                  boxrule=1pt,toprule=1pt,
                  colback=white
                  ]%%
	}
{\end{tcolorbox}}
\newcommand{\II}{\ensuremath{\mathbb{I}}}
\newcommand{\mireal}[1][]{
  \ifx\relax#1\relax%
    \II(\mione \,; \mitwo)%
  \else%
    \II(\mione \,; \mitwo\mid #1)%
  \fi
}
\title{Improved Bounds for Fully Dynamic Matching via \\ Ordered \rs Graphs} 
\author{Sepehr Assadi%
\footnote{(sepehr@assadi.info) Cheriton School of Computer Science, University of Waterloo. 
Supported in part by a  Sloan Research Fellowship, an NSERC
Discovery Grant, a University of Waterloo startup grant, and a Faculty of Math Research Chair grant. \smallskip} \and
Sanjeev Khanna%
\footnote{(sanjeev@cis.upenn.edu) Department of Computer and Information Science, University of Pennsylvania. Research supported in part by NSF awards CCF-1934876, CCF-2008305, and CCF-2402284.\smallskip} \and 
Peter Kiss%
\footnote{(peter.kiss@warwick.ac.uk) University of Vienna. The author completed parts of this project at the University of Warwick.}
}
\date{}
\begin{document}
\maketitle

\pagenumbering{roman}

% !TeX root = main.tex 
%!TEX root = main.tex

\begin{abstract}

In a very recent breakthrough, Behnezhad and Ghafari [FOCS'24] developed a novel fully dynamic randomized algorithm for maintaining a $(1-\eps)$-approximation of maximum matching 
with amortized update time \emph{potentially} much better than the trivial $O(n)$ update time. The runtime of the BG algorithm is parameterized 
via the following graph theoretical concept: 
\begin{itemize}
	\item For any $n$, define $\mathsf{ORS}(n)$---standing for \emph{Ordered \rs Graph}---to be the largest number of edge-disjoint matchings $M_1,\ldots,M_t$ of size $\Theta(n)$ in an $n$-vertex graph 
such that for every $i \in [t]$, $M_i$ is an induced matching in the subgraph $M_{i} \cup M_{i+1} \cup \ldots \cup M_t$.
\end{itemize}
Then, for any fixed $\eps > 0$, the BG algorithm runs in 
\[
	O\left( \sqrt{n^{1+O(\eps)} \cdot \mathsf{ORS}(n)} \right)
\]
amortized update time with high probability, even against an adaptive adversary. $\mathsf{ORS}(n)$ is a close variant of a more well-known quantity regarding \rs graphs (which require every matching to be induced 
regardless of the ordering). It is currently only known that $n^{o(1)} \leq \mathsf{ORS}(n) \leq n^{1-o(1)}$, and closing this gap appears to be a 
notoriously challenging problem. 

\medskip

\emph{If} it turns out that $\mathsf{ORS}(n) = n^{o(1)}$, namely, the current lower bounds are close to being optimal, then, this algorithm achieves an update time of $n^{1/2+o(1)}$ for
$(1-\eps)$-approximation of fully dynamic matching, making progress on a major open question in the area. 

\medskip

\paragraph{Our Result:} In this work, we further strengthen the result of Behnezhad and Ghafari and push it to limit to obtain a randomized algorithm with amortized update time of
\[
	n^{o(1)} \cdot \mathsf{ORS}(n)
\]
with high probability, even against an adaptive adversary.
In the limit, i.e., \emph{if} current lower bounds for $\mathsf{ORS}(n) = n^{o(1)}$ are almost optimal, our algorithm achieves an $n^{o(1)}$ update time for $(1-\eps)$-approximation of maximum matching, almost fully resolving this fundamental question. In its current stage also, this fully reduces the algorithmic problem
of designing dynamic matching algorithms to a purely combinatorial problem of upper bounding $\mathsf{ORS}(n)$ with no algorithmic considerations.

\end{abstract}

\clearpage

\setcounter{tocdepth}{3}
\tableofcontents
\clearpage
\pagenumbering{arabic}
\setcounter{page}{1}

% !TeX root = main.tex 
%!TEX root = main.tex

\newcommand{\ORS}[2]{\ensuremath{\textnormal{\textsf{ORS}}(#1,#2)}\xspace}

\renewcommand{\alg}{\textnormal{{\large \ensuremath{\mathbb{A}}}}}% \mathbb{A}}}

\newcommand{\Algweak}{\ensuremath{\alg_{\textnormal{weak}}}\xspace}

\newcommand{\Algsub}{\ensuremath{\alg_{\textnormal{sublinear}}}\xspace}

\newcommand{\Algadd}{\ensuremath{\alg_{\textnormal{additive}}}\xspace}

\newcommand{\tmu}{\tilde{\mu}}

\newcommand{\density}[1]{\ensuremath{\textnormal{density}(#1)}\xspace}

\newcommand{\indeg}[1]{\ensuremath{{\Delta}_{\textsc{in}}(#1)}\xspace}

\newcommand{\MM}{\ensuremath{\mathcal{M}}}

\newcommand{\dstar}{d^{*}}

\newcommand{\MMstar}{\ensuremath{\MM^{*}}}

\newcommand{\degstar}{\ensuremath{\deg^*}}

\section{Introduction}\label{sec:intro}

We study the problem of maintaining an approximate maximum matching in a fully dynamic graph. In this problem, we have a graph $G=(V,E)$ that undergoes insertion and deletion of edges by an adversary 
and our goal is to maintain (edges of) an approximate maximum matching of $G$ after each update. This is one of the most central problems in the dynamic graph literature; see~\cite{OnakR10,GuptaP13,BernsteinS15,BernsteinS16,Solomon16,BaswanaGS18,BehnezhadK22,Behnezhad23,BhattacharyaKSW23,AssadiBKL23,BhattacharyaKS23,Liu24,BehnezhadG24} and references therein.

In a very recent breakthrough,~\cite{BehnezhadG24} developed an algorithm that for any fixed $\eps > 0$, maintains (edges of) a $(1-\eps)$-approximate maximum matching in a fully dynamic graph 
with \emph{potentially} much better than $O_{\eps}(n)$ update time. Specifically, the runtime of the algorithm of~\cite{BehnezhadG24} is parameterized based on the density of a certain family of extremal graphs which are (quite) closely related to \rs (RS) graphs~\cite{RuzsaS78} (see~\cite{AlonMS12,FoxHS17} for more
context on RS graphs, and~\cite{GoelKK12,AssadiBKL23,AssadiS23} and references therein for their applications to dynamic graph and other sublinear algorithms). 
\cite{BehnezhadG24} defined the following family of closely related graphs. % called \emph{Ordered \rs Graphs}. 

\begin{definition}[Ordered \rs (ORS) Graphs~\cite{BehnezhadG24}]\label{def:ors}
	A graph $G=(V,E)$ is called an \textbf{\bm{$(r,t)$}\emph{-ORS graph}} if its edges can be partitioned into an \underline{ordered set} of $t$ matchings $M_1,\ldots,M_t$ each of size $r$, such that for every $i \in [t]$, 
	the matching $M_i$ is an induced matching in the subgraph of $G$ on $M_{i} \cup M_{i+1} \cup \ldots \cup M_t$. 

	We define $\ORS{n}{r}$ as the largest choice of $t$ such that an $n$-vertex $(r,t)$-ORS graph exist. 
\end{definition}

Unfortunately, exactly as in RS graphs, density of ORS graphs is quite poorly understood at this point. Currently, for any constant $\delta \in (0,1/4)$, it is only known that 
\begin{align}
	n^{\Omega_{\delta}(1/\log\log{n})} \Leq{\cite{FischerLNRRS02,GoelKK12}} \ORS{n}{\delta n} \Leq{\cite{BehnezhadG24}} \frac{n}{\log^{(\text{poly}(1/\delta))}{(n)}},  \label{eq:ORS}
\end{align}
where $\log^{(k)}(n)$ is the $k$-iterated logarithm function, i.e., 
\[
\log^{(k)}{(n)} := \underbrace{\log\log\cdots\log}_{k}{(n)}.
\]
This is quite similar to the situation for RS graphs (modulo a slightly better dependence in RS graphs on the parameter $\delta$ in the upper bound due to~\cite{Fox11} (see also~\cite{FoxHS17}), 
namely, $\log^{O(\log{(1/\delta)})}(n)$ instead in the denominator).  

The result of~\cite{BehnezhadG24} is a randomized algorithm that given any fixed $\eps > 0$ with high probability maintains a $(1-\eps)$-approximation of maximum matching in a fully dynamic graph with an amortized update time of
\[
	O\Paren{\sqrt{n^{1+\eps} \cdot \ORS{n}{\Theta_{\eps}(n)}}}.
\]
Thus, \emph{if} it happens to be the case that ORS graphs cannot be dense, i.e., $\ORS{n}{\Theta_{\eps}(n)} = n^{1-\Omega(1)}$, this algorithm achieves an update time of $n^{1-\Omega(1)}$ for this problem, 
making progress on a major open question in the dynamic matching literature~\cite{GuptaP13,BernsteinS16,BehnezhadK22,BhattacharyaKS23,BehnezhadG24}. Moreover, in the limit, namely, if the current lower bounds of~\Cref{eq:ORS} on ORS are almost optimal, then, this algorithm achieves an update time of $n^{1/2+o(1)}$; currently, the best algorithm known with such an update time due to~\cite{BernsteinS16} can only achieve a $2/3$-approximation.

\subsection{Our Contribution} 
We build on the approach of~\cite{BehnezhadG24} and push it to its limit to obtain the following result. 

\begin{result}\label{res:main}
	There is an algorithm that for any fixed $\eps > 0$ maintains a $(1-\eps)$-approximate maximum matching of any fully dynamic graph with amortized update time of
	\[
		O\Paren{n^{o(1)} \cdot \ORS{n}{\Theta_{\eps}(n)}}. \footnote{More specifically, for any $\beta_0 \in (0,1)$, there is a $\beta_1 \in (0,1)$ such that this runtime is $O(n^{\beta_0} \cdot \ORS{n}{\beta_1 \cdot f(\eps) \cdot n})$
		for some fixed function $f$ independent of $\beta_0,\beta_1$.} 
	\]
	The algorithm is randomized and its guarantees hold with high probability against an adaptive adversary. The algorithm does \underline{not} assume a prior knowledge of the value of $\ORS{n}{\Theta_{\eps}(n)}$
	to achieve its guarantee. 
\end{result}

In the limit, \emph{if} ORS graphs cannot be much denser than the lower bounds in~\Cref{eq:ORS},~\Cref{res:main} achieves an $n^{o(1)}$ amortized update time, almost fully settling the question of $(1-\eps)$-approximation of fully dynamic matching. 
Beside the conditional upper bound of~\cite{BehnezhadG24} (which would be an $n^{1/2+o(1)}$ update time algorithm under this hypothesis), it is also known unconditionally how to obtain an update time of $n/2^{\Theta(\sqrt{\log{n}})} \cdot \poly(1/\eps)$ on bipartite graphs~\cite{Liu24} (and presumably a similar runtime with $O(1/\eps)^{O(1/\eps)}$-dependence instead on general graphs using the reduction of~\cite{McGregor05}; see~\Cref{prop:boosting}). Moreover,~\cite{BhattacharyaKS23}, building on~\cite{Behnezhad23,BhattacharyaKSW23},   
designed an algorithm that for any fixed $\eps > 0$, obtains an update time of $m^{1/2-\Omega_{\eps}(1)}$ for the easier problem of maintaining the \emph{size} of the maximum matching (but not its edges).

\Cref{res:main} also suggests that almost any interesting lower bound for this problem (even under computational hardness assumptions) should effectively rule out existence of even mildly dense ORS graphs, which will  constitute a big breakthrough given the close connection of these graphs to RS graphs (or alternatively, explicitly condition on the assumption that ORS graphs are dense; similar assumptions for RS graphs have been used to prove lower bounds for $(1-\eps)$-approximation of the maximum matching problem in other settings, e.g., in the streaming model~\cite{AssadiS23}). 

It is worth noting here that existing conditional lower bounds in~\cite{HenzingerKNS15} rule out $n^{1-\Omega(1)}$ update time algorithms for computing an \emph{exact} maximum matching, 
and more recently in~\cite{Liu24}, for even a $(1-\eps)$-approximation but only when $\eps = n^{-\Omega(1)}$. On the other hand, our focus in~\Cref{res:main} is on the regime when $\eps \in (0,1)$ is fixed and independent of $n$ (which is often the main regime of 
interest in the context of sublinear algorithms). Indeed, the RS graph constructions of~\cite{RuzsaS78} (see also~\cite{AlonMS12}) imply that $\ORS{n}{\delta n} = \Omega(n)$ for $\delta \leq 2^{-\Theta(\sqrt{\log{n}})}$. This implies that 
for small enough $\eps$, it is already known that~\Cref{res:main} cannot achieve any non-trivial guarantee. 

\subsection{Our Algorithm at a High Level} 
By the existing boosting frameworks for matchings (see~\Cref{prop:boosting}), obtaining an $\eps n$ \emph{additive} approximation reduces
to the following problem: given a fully dynamic graph $G=(V,E)$, every $\Theta_{\eps}(n)$ updates we receive $O_\eps(1)$ \empty{queries} of the form $U \subseteq V$ and must return an $O(1)$-approximate maximum matching in 
the induced subgraph $G[U]$ (see~\Cref{prob:key}).  An additive $\eps n$ approximation to matching can also be turned into a multiplicative one, using 
another standard technique (see~\Cref{prop:additive}). Both these parts are by-now standard; see, e.g.~\cite{Kiss22,Behnezhad23,BhattacharyaKSW23,BhattacharyaKS23,AzarmehrBR24,Liu24,BehnezhadG24}.
The main part then is to solve~\Cref{prob:key}. 

\paragraph{The approach of~\cite{BehnezhadG24}.} The solution of \cite{BehnezhadG24} for~\Cref{prob:key} can be summarized as follows. The algorithm processes the updates in \emph{batches}. 
In each batch, $G_0$ will be the current graph at the start of the batch and 
the \emph{new} updates are inserted into a new graph $G_1$. There will also be a graph $G_2$ which is created by moving certain edges from $G_0$ (will be described later). 
 Given a query $U$, the algorithm tries to find a large matching in $(G_1 \cup G_2)[U]$ and if it succeeds, it returns that one and moves on. This step is done using the 
 greedy matching algorithm taking time linear in the size of $G_1 \cup G_2$. But, if $(G_1 \cup G_2)[U]$ does not have a large matching, the algorithm needs to search for a one in $G_0[U]$. This is done via a novel 
 \emph{sublinear time} ``opportunistic'' algorithm: the algorithm takes $O(n^{2+\eps}/d)$ time with high probability where $d$ is the \emph{average} degree of the graph $G_0[V(M)]$; i.e.,
 if the matching $M$ is ``far from'' being induced, then it can be found much faster than when it is close to being induced. The edges in this matching are then moved from $G_0$ to $G_2$. 
 
 The runtime analysis of this algorithm is as follows. Suppose each batch consists of $s$ updates. Since $O_{\eps}(1)$ queries are called every $\Theta_{\eps}(n)$ updates, the algorithm needs to handle $O_{\eps}(s/n)$ queries in a batch. 
 Moreover, both graphs $G_1$ and $G_2$ can only have $O_{\eps}(s)$ edges: the first one since there are $s$ updates in a batch and the second because each of the $O_{\eps}(s/n)$ queries may insert a matching inside $G_2$. 
 Thus, the entire time spent in this batch for running the greedy algorithm on $G_1 \cup G_2$ is $O_{\eps}(s^2/n)$ time. The runtime on $G_0$ however is calculated differently using a global argument. 
 Using the fact that $G_0$ only undergoes deletions during a batch,~\cite{BehnezhadG24} come up with an elegant analysis that shows that if the algorithm is spending a ``lot of time'' in finding ``near induced'' matchings in $G_0$, then, one can find a ``dense'' ORS graph in $G_0$ -- this allows for bounding the entire time the algorithm is spending on $G_0$ during this batch by $O(n^{2+\eps} \cdot \ORS{n}{\Theta_{\eps}(n)})$ time. This implies that by taking $s \approx n^{3/2} \cdot \ORS{n}{\Theta_{\eps}(n)}^{1/2}$, 
 the amortized update time of the algorithm will become $O_{\eps}(s/n)$ (over $s$ updates) which is $O_{\eps}(\sqrt{n^{1+\eps} \cdot \ORS{n}{\Theta_{\eps}(n)}})$ time. 

\paragraph{Our approach.} In the algorithm of~\cite{BehnezhadG24}, if the size of batch grows then so do the number of edges in the graph $G_1 \cup G_2$, and hence the cost of finding a greedy matching in $(G_1 \cup G_2)[U]$. However, the ORS-density based upper bound of the total running time of calls to the opportunistic algorithm remains unchanged with larger batches. Our main goal is to accelerate the step of finding matchings in $G_1 \cup G_2$ to allow for larger batches to amortize over the running cost of the opportunistic algorithm. To do this, observe that $G_1$ and $G_2$ are both dynamic graphs undergoing a small number of updates per each update to the underlying graph. This suggests that instead of statically finding matchings in $G_1 \cup G_2$ we may dynamically maintain them via an efficient dynamic matching algorithm.

We develop a \emph{recursive} variant of the algorithm of~\cite{BehnezhadG24}. The first main ingredient is a sublinear time algorithm that given a graph $G=(V,E)$ with $m$ edges and a query $U \subseteq V$, 
finds a large matching $M$ in $G$ in $O(m^{1+\eps}/d)$ time where $d$ is the \emph{maximum} degree of $M$ in $G[V(M)]$. Thus, this result strengthens the algorithm of~\cite{BehnezhadG24} in both relating it to the density of $G$ (instead of $O(n^2)$ always) 
and providing a stronger guarantee on the maximum internal degree of $M$ instead of the average degree.  The main step however is how to perform the recursion. 

We present a family of algorithms $\set{\alg_i}_{i \geq 1}$ with progressively better update times, where $\alg_1$ is similar to the algorithm of~\cite{BehnezhadG24} 
with a key difference of having $O_{\eps}(\sqrt{(m/n)^{1+\eps} \cdot \ORS{n}{\Theta_{\eps}(n)}})$ update time
instead (here, $m$ is a promised upper bound on the number of edges in $G$). 
Let us now consider constructing $\alg_2$ from $\alg_1$. 

We also process the inputs 
in batches of size $s$ with $G_0$ being the current graph at the start of the batch, $G_1$ receiving the updates during the batch, and $G_2$ receiving some removed matchings from $G_0$ while answering the queries. The main difference is that 
we are going to run $\alg_1$ on $G_1$ and $G_2$ separately instead of the greedy algorithm. These graphs now are going to have $O_{\eps}(s)$ edges in total (similar to what argued earlier) and thus the total runtime of processing
these graphs is $O_{\eps}(s \cdot \sqrt{(s/n)^{1+\eps} \cdot \ORS{n}{\Theta_{\eps}(n)}}$. If we find a large matching for a given query $U$ from either $G_1$ or $G_2$, we will be done, but if not, we need to rely on $G_0[U]$. 
In this case, we run our new sublinear time algorithm to find a matching with maximum internal degree $d$ in $O(m^{1+\eps}/d)$ time.  A similar argument as in~\cite{BehnezhadG24} (in fact considerably simpler given our stronger maximum degree guarantee) 
allows us to bound the total runtime of this step with $O_{\eps}(m^{1+\eps} \cdot \ORS{n}{\Theta_{\eps}(n)})$ time. Optimizing for the choice of $s$ then leads to an $O_{\eps}((m/n)^{1/3+\eps} \cdot \ORS{n}{\Theta_{\eps}(n)}^{2/3})$ 
amortized update time. 

Continuing like this for $\alg_3,\alg_4,\cdots$, while explicitly accounting for the loss in parameters (especially size of induced matchings in ORS graphs), gives us our final algorithm.

% !TeX root = main.tex 
%!TEX root = main.tex

%\clearpage

\section{Preliminaries}\label{sec:prelim}

\subsection{Basic Notation and Representation of Graphs}\label{sec:notation}
For a graph $G=(V,E)$, we use $n := \card{V}$ and $m:=\card{E}$. For a vertex $v \in V$, we use $N(v)$ to denote the neighbors of $v$ and $\deg(v) := \card{N(v)}$ to denote
its degree. For a subset $U \subseteq V$, $G[U]$ denotes the induced subgraph of $G$ on $U$. We use $\mu(G)$ to denote the maximum matching size in $G$. 

Since we will be designing sublinear-time algorithms for a dynamically changing graph, we briefly describe how the graphs will be represented to support various operations, namely, insertion and deletion of edges, neighbor queries, and pair queries, each in $O(1)$ expected time. For each vertex $u \in V$, we maintain a {\em dynamic} array $A_u$, a {\em dynamic} hash table $h_u$, and the current degree $\deg{(u)}$ of $u$. Here, dynamic refers to the property that at all times, the size of $A_u$ is $\Theta(\deg{(u)})$.

When an edge $(u,v)$ is inserted, we increment $\deg{(u)}$ by $1$, set $A_u[\deg{(u)}]=v$, and insert $v$ in the hash table $h_u$, storing along with it it the value $\deg{(u)}$, namely, the location of the vertex $v$ in the array $A_u$. This takes $O(1)$ expected time.

When an edge $(u,v)$ is deleted, let $w$ be the vertex at $A_u[\deg{(u)}]$. We decrement $\deg{(u)}$ by $1$. If $w = v$, we simply delete $v$ from the hash table $h_u$. Otherwise, let $j$ be the location of the vertex $v$ in $A_u$ which can be recovered using $h_u(v)$. We set $A_u[j] = w$, delete both $v$ and $w$ from $h_u$, and reinsert $w$ in $h_u$ associating $j$ to be its new location in $A_u$. This takes $O(1)$ expected time.

Finally, given any integer $i \in [\deg{(u)}]$, the task of outputting the $i$-{th} neighbor of $u$ is done by simply returning the vertex stored in $A_u[i]$. This also allows for sampling a random neighbor of $u$.

\subsection{Tools from Prior Work}\label{sec:tools} 

We start by recalling the following standard fact about the greedy algorithm for approximating maximum matchings. 

\begin{fact}\label{fact:greedy}
	Let $G=(V,E)$. The greedy algorithm that starts with $M = \emptyset$, iterates over edges of $G$ in any arbitrary order, and add an edge to $M$ if both its endpoints are currently unmatched, 
	returns a matching $M$ of size $\card{M} \geq 1/2 \cdot \mu(G)$ in $O(n+m)$ time. 
\end{fact}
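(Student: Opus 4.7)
The plan is to handle the two parts of the statement separately: the running time and the approximation ratio. The running time is essentially immediate from the data-structure description in \Cref{sec:notation}: the greedy algorithm visits each edge $(u,v) \in E$ exactly once, and for each edge it needs to check whether $u$ and $v$ are currently unmatched (an $O(1)$ operation if we maintain a boolean flag per vertex) and, if so, set both flags to matched and append the edge to the output list (also $O(1)$ per edge). Initializing the flags costs $O(n)$, and the per-edge work totals $O(m)$, yielding the claimed $O(n+m)$ bound.

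For the approximation guarantee, I would use the standard charging argument. Fix any maximum matching $M^{*}$ of $G$, so $\card{M^{*}} = \mu(G)$. The key observation is that for every edge $(u,v) \in M^{*}$, at least one of $u,v$ is matched by $M$ at termination. Indeed, suppose both $u$ and $v$ were unmatched in $M$ at the end. Since matched status is monotone under greedy (once a vertex becomes matched it stays matched), both $u$ and $v$ were also unmatched at the moment greedy examined the edge $(u,v)$; but then greedy would have added $(u,v)$ to $M$, making $u$ and $v$ matched, a contradiction.

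Thus I can define a map from $M^{*}$ to the vertex set $V(M)$ of endpoints of $M$, sending each edge of $M^{*}$ to one of its endpoints that is matched in $M$. Because $M^{*}$ is itself a matching, distinct edges of $M^{*}$ map to distinct vertices, so this map is injective and gives $\card{M^{*}} \leq \card{V(M)} = 2\card{M}$. Rearranging yields $\card{M} \geq \tfrac{1}{2}\mu(G)$, which completes the sketch. There is essentially no technical obstacle here; the only care needed is to phrase monotonicity of the matched status correctly so that the contradiction argument is airtight.
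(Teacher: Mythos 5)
The paper states \Cref{fact:greedy} as a standard fact without proof, so there is no paper argument to compare against; your proof is the standard and correct one (greedy produces a maximal matching, and the injection from $M^{*}$ into $V(M)$ gives the $\nicefrac{1}{2}$-approximation), and the running-time accounting is likewise correct.
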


\paragraph{Boosting frameworks for approximate matching.} 
We use standard boosting frameworks for obtaining a $(1-\eps)$-approximation to matching, using a ``weak'' approximation algorithm that only returns an $O(1)$-approximation. 
The original version of this framework is due to~\cite{McGregor05} and was de-randomized in~\cite{Tirodkar18}; for bipartite graphs, more efficient reductions are known in~\cite{AhnG11,AssadiLT21}. These results 
were tailored to additive approximation and dynamic graphs in~\cite{BhattacharyaKS23}. 

\begin{proposition}[\!\!\cite{McGregor05,AhnG11,Tirodkar18,AssadiLT21,BhattacharyaKS23}]\label{prop:boosting}
	Let $\gamma,\eps \in (0,1)$ be parameters. There exist functions $f(\gamma,\eps)$ and $g(\gamma,\eps)$ such that the following holds. 
	Let $\Algweak$ be an algorithm that given an $n$-vertex graph $G=(V,E)$ and any set $U \subseteq V$ of vertices with $\mu(G[U]) \geq f(\gamma,\eps) \cdot n$, returns a matching of size at least $\gamma \cdot f(\gamma,\eps) \cdot n$ in $G[U]$.  
	Then, there is a algorithm that given $G=(V,E)$ makes $g(\gamma,\eps)$ calls to $\Algweak$ on \underline{adaptively} chosen subsets of vertices, spending $O(f(\gamma,\eps) \cdot n)$ time preparing each subset, 
	and returns a matching of size $\mu(G) - \eps \cdot n$ in $G$. 
	
	For bipartite graphs, $f(\gamma,\eps) = \poly(\eps)$ and $g(\gamma,\eps) = \poly(1/(\gamma \cdot \eps))$ by~\cite{AhnG11,AssadiLT21}, while for general graphs, both $f(\gamma,\eps)^{-1}, g(\gamma,\eps) = (1/(\gamma \cdot \eps))^{O(1/(\gamma \cdot \eps))} = O_{\gamma,\eps}(1)$ by~\cite{McGregor05,Tirodkar18}. 
\end{proposition}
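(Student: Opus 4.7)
The plan is to invoke the standard McGregor-style boosting framework, in which one starts with $M = \emptyset$ and iteratively augments $M$ via short augmenting paths until $|M| \geq \mu(G) - \eps n$. The combinatorial engine driving the framework is the well-known fact that whenever $|M| \leq \mu(G) - \eps n$, the symmetric difference $M \triangle M^*$ with a maximum matching $M^*$ decomposes into at least $\eps n$ vertex-disjoint $M$-augmenting paths whose total length is at most $n$; by averaging, at least $\Omega(\eps n)$ of these have length at most $O(1/\eps)$. So each round of successful augmentation reduces the gap $\mu(G) - |M|$ by a controllable fraction, and only a bounded number of rounds is needed.

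The main technical step is to extract such short augmenting paths using only the oracle $\Algweak$ on \emph{induced} subgraphs of $G$. For each odd target length $\ell \in \{1,3,\ldots,O(1/\eps)\}$, I would build the layered BFS structure rooted at currently unmatched vertices, alternating unmatched and matched edges, and then identify a subset $U \subseteq V$ --- a carefully chosen union of two layers --- such that a matching in $G[U]$ of size $k$ corresponds directly to $k$ vertex-disjoint $M$-augmenting paths of length $\ell$. This style of reduction is made explicit for bipartite graphs in \cite{AhnG11,AssadiLT21}. As long as short augmenting paths are sufficiently abundant at this length, $\mu(G[U]) \geq f(\gamma,\eps) \cdot n$, so $\Algweak$ is guaranteed to return a matching of size $\gamma \cdot f(\gamma,\eps) \cdot n$ in $G[U]$, which we convert back into that many vertex-disjoint $M$-augmenting paths and apply to $M$. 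The preparation cost of $O(f(\gamma,\eps) \cdot n)$ per call is paid by a truncated BFS that touches only the neighborhoods of the relevant $f(\gamma,\eps) \cdot n$ unmatched vertices --- enough to define $U$ and extract augmenting paths from the matching returned by $\Algweak$.

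The main obstacle is the general (non-bipartite) case, where blossoms obstruct a direct reduction to the bipartite layered structure. McGregor's resolution in \cite{McGregor05} is to randomly orient the matched edges of $M$ and randomly bipartition the layers, so that the augmenting-path search behaves essentially as in the bipartite case; \cite{Tirodkar18} derandomizes this. The cost is that each fixed augmenting path of length $\ell$ is ``witnessed'' by the random choices only with probability $2^{-\Theta(\ell)}$, which is precisely what drives the blowup to $f(\gamma,\eps)^{-1}, g(\gamma,\eps) = (1/(\gamma \eps))^{O(1/(\gamma\eps))}$ for general graphs; in the bipartite case, no such loss is incurred and the improved $\poly(1/(\gamma\eps))$ bounds of \cite{AhnG11,AssadiLT21} apply. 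Finally, the additive form $\mu(G) - \eps n$ in the statement follows by running the multiplicative $(1-2\eps)$-version of the framework and using $\mu(G) \leq n/2$, which is the style of reduction carried out in \cite{BhattacharyaKS23}.
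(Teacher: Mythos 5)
The paper gives no proof of this proposition---it is cited wholesale from prior work---so there is no internal argument to compare against; your sketch is a correct high-level account of the McGregor-style boosting framework from those sources, including the abundance of short augmenting paths when the current matching is far from optimal, the layered reduction to matching queries on induced subgraphs, the blossom obstruction and random-orientation fix in general graphs (derandomized by Tirodkar), the $2^{-\Theta(\ell)}$ source of the parameter blowup, and the passage from multiplicative to additive guarantees.

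Two mechanical details in your sketch are off, however. First, a single matching in the union of two layers does not by itself yield $k$ vertex-disjoint augmenting paths of length $\ell>1$; the cited constructions find matchings between \emph{each} consecutive pair of layers in turn and then stitch the resulting segments into full paths, which is why $g(\gamma,\eps)$ is $\poly(1/(\gamma\eps))$ even in the bipartite case rather than $O(1/\eps)$. Second, the $O(f(\gamma,\eps)\cdot n)$ preparation cost per call cannot be justified by a truncated BFS over the neighborhoods of $O(f(\gamma,\eps)\cdot n)$ free vertices, since those neighborhoods may contain $\Omega(m)$ edges in total; in the actual frameworks, the next query set $U$ is read off from per-vertex labels (layer index, active/inactive, current partial-path membership) maintained incrementally, which is $O(n)$ work per call and does not touch the edges of $G$ at all. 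Neither point is a fundamental gap---both are details of the cited constructions---but the specific mechanism you describe would exceed the stated time budget, so it is worth knowing that the real argument is organized a little differently.
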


\paragraph{Sublinear-time estimation of maximum matching size.} We also rely on the sublinear-time algorithm of~\cite{Behnezhad21} for matching \emph{size} estimation. 

\begin{proposition}[\!\cite{Behnezhad21}]\label{prop:subtime-size}
	There is a randomized algorithm $\Algsub$ that for any $n$-vertex graph $G=(V,E)$ and a parameter $\eps \in (0,1)$, makes $\Ot(n \cdot \poly(1/\eps))$ queries to the adjacency matrix of $G$ and with high probability
	outputs an estimate $\tmu(G)$ such that 
	\[
		\frac{1}{2} \cdot \mu(G) - \eps \cdot n \leq \tmu(G) \leq \mu(G). 
	\]
\end{proposition}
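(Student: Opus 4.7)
The plan is to prove this proposition through the random-order local-simulation (LCA) paradigm refined in~\cite{Behnezhad21}. The strategy is to estimate the size of the greedy maximal matching $M_\pi$ produced by processing the edges of $G$ in a uniformly random order $\pi$; since $M_\pi$ is maximal we always have $\tfrac{1}{2}\mu(G) \leq |M_\pi| \leq \mu(G)$, so an $\tfrac{\eps}{2} \cdot n$-additive estimate of $|M_\pi|$ immediately yields the one-sided guarantee stated in the proposition.

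First, I would draw the ranking $\pi : E \to [0,1]$ \emph{lazily}: whenever the simulation needs $\pi(e)$ for a specific edge $e$, it draws a fresh uniform value in $[0,1]$, cached for subsequent consistency. A recursive matching oracle $\mathcal{O}_\pi$ on an edge $(u,v)$ is then defined by enumerating all edges incident to $u$ or $v$ with $\pi$-rank strictly smaller than $\pi(u,v)$ in increasing rank order, recursing on each, and returning $(u,v) \in M_\pi$ iff none of them returns TRUE. To estimate $|M_\pi|$, I would sample $k = \Theta(\log n / \eps^2)$ vertices $v_1,\dots,v_k$ uniformly at random, and for each $v_i$ decide whether it is matched in $M_\pi$ by invoking $\mathcal{O}_\pi$ on its incident edges in increasing order of rank (stopping at the first TRUE). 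The output is $\tmu = \tfrac{n}{2} \cdot (\text{fraction of matched samples})$; since the fraction of vertices matched in $M_\pi$ equals $2|M_\pi|/n$, a standard Chernoff bound gives $|\tmu - |M_\pi|| \leq \tfrac{\eps n}{2}$ with high probability.

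The heart of the argument, and the main obstacle, is bounding the \emph{total} query cost of the simulation. A naive analysis of the recursion tree blows up in the maximum degree: at every node one may need to enumerate many lower-ranked incident edges. The key technical step is an amortized analysis showing that, in expectation over both $\pi$ and the random sampled vertices, the simulation touches only $\Ot(n \cdot \poly(1/\eps))$ edges in \emph{total} across all $k$ samples. I would combine three ingredients here: (i) a charging scheme that assigns each recursive call to a unique low-ranked incident matched edge that ends up resolving it, (ii) concentration results for the ``dependency depth'' of a random edge under a uniformly random ordering (building on Yoshida--Yamamoto--Ito and Onak), and (iii) an explicit truncation that aborts the recursion once it has touched more than $\poly(1/\eps)$ vertices, with the bias introduced by truncation absorbed into the additive $\eps n$ slack. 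Putting these together gives the claimed query bound, and running the simulation itself adds only an $O(1)$-factor overhead per query in the representation described in \Cref{sec:notation}.
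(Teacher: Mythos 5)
This proposition is not proved in the paper at all; it is imported verbatim as a black-box citation to Behnezhad's sublinear-time matching paper, so there is no in-paper proof for your proposal to be compared against. What you have written is a reasonable high-level summary of the approach in the cited work: simulating the random-order greedy maximal matching via a local matching oracle, sampling vertices, using Chernoff for the estimate, and invoking the Yoshida--Yamamoto--Ito/Onak style query analysis together with truncation. That is indeed the right line of argument for this result, so as a citation gloss it is on track.

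Two remarks are worth making, though. First, as written your estimator $\tmu = \tfrac{n}{2} \cdot (\text{fraction matched})$ only satisfies a two-sided bound $|\tmu - |M_\pi|| \leq \tfrac{\eps n}{2}$; since $|M_\pi|$ can equal $\mu(G)$, this allows $\tmu > \mu(G)$, violating the stated one-sided guarantee $\tmu \leq \mu(G)$. You need to explicitly report $\tmu - \tfrac{\eps n}{2}$ (or equivalently fold the shift into the constant), which then gives $\tfrac{1}{2}\mu(G) - \eps n \leq \tmu \leq \mu(G)$ as claimed. Second, the genuinely hard part of Behnezhad's theorem --- getting total query complexity $\Ot(n \cdot \poly(1/\eps))$ rather than, say, $\Ot(n \cdot d_{\max})$, and doing so with high probability rather than merely in expectation --- is acknowledged by you but only gestured at; a charging scheme plus truncation plus the prior random-order bounds is the right toolbox, but the proposal as written does not constitute a proof of that bound, and that is exactly the contribution of the cited paper that the present paper is relying on as a black box. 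Since the paper itself offers no proof to compare against, this level of detail is acceptable for a citation, but it should not be mistaken for a self-contained argument.
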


\paragraph{Vertex sparsification.} Finally, we use vertex sparsification approaches of~\cite{AssadiKLY16,AssadiKL16,ChitnisCEHMMV16} as implemented by~\cite{Kiss22} for dynamic graphs. These sparsification approaches reduce 
the number of vertices to $O(\mu(G)/\eps)$ via vertex contraction while preserving a $(1-\eps)$-approximate maximum matching in the graph. This allows one to turn additive approximation to matching into a multiplicative one, with minor overhead (see, e.g.~\cite{BehnezhadG24}, for an example of how this is used).  

\begin{proposition}[\!\cite{AssadiKLY16,AssadiKL16,ChitnisCEHMMV16,Kiss22}]\label{prop:additive}
	Suppose $\Algadd$ is an algorithm that given a parameter $\eps > 0$ can process a fully dynamic $n$-vertex graph $G=(V,E)$ and maintains a matching of size at least $\mu(G) - \eps \cdot n$ in $T(n,\eps)$ amortized update time. 
	Then, there is a randomized algorithm that can with high probability maintain a $(1-\eps)$-approximation to maximum matching in $n$-vertex fully dynamic graphs in $O(T(n,\Theta(\eps^2)) \cdot \poly(\log{(n)}/\eps))$ amortized update time. 
\end{proposition}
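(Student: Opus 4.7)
The plan is to combine a standard geometric doubling over the unknown value $\mu(G)$ with a dynamic vertex-sparsification that reduces the instance to one in which an additive $\eps n$ error is effectively a multiplicative $\eps \mu(G)$ error. The guiding intuition is: if $\mu(G) \approx k$, we would like to contract vertices so that the effective vertex count is $n_k = \Theta(k/\eps)$; running $\Algadd$ with additive parameter $\eps' = \Theta(\eps^2)$ on the contracted graph then produces an additive error of $\Theta(\eps^2 \cdot k/\eps) = \Theta(\eps k) = \Theta(\eps \mu(G))$, which is the desired multiplicative slack.

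Concretely, I would maintain $L = O(\log n)$ parallel data structures indexed by $k \in \{1, 2, 4, \ldots, 2^L\}$, one per guess $\mu(G) \in [k, 2k)$. For each $k$, sample a fresh random hash $h_k : V \to [n_k]$ with $n_k = \Theta(k/\eps)$, and let $G_k$ be the (simple) graph on super-vertices $[n_k]$ whose edges are the images of $E(G)$ under $(u,v) \mapsto (h_k(u),h_k(v))$ (dropping self-loops and collapsing parallel edges, while remembering one $G$-witness per super-edge so that a matching in $G_k$ can be lifted back to one in $G$). The sparsification results cited in the proposition guarantee that whenever $\mu(G) \leq 2k$, with high probability $\mu(G_k) \geq (1-\eps)\mu(G)$: a fixed maximum matching of $G$ survives in $G_k$ in expectation up to a $(1 - O(\eps))$ factor by a birthday-style calculation on $n_k$, followed by a concentration argument (McDiarmid suffices since each vertex reassignment affects the surviving matching by at most $2$). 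Each copy receives the single edge update induced by the original update and feeds it to a dedicated instance of $\Algadd$ at additive parameter $\eps' = c \eps^2$.

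For the correct guess $k^{\star}$ with $\mu(G) \in [k^{\star}, 2k^{\star})$, the matching returned by $\Algadd$ has size at least
\[
\mu(G_{k^{\star}}) - \eps' \cdot n_{k^{\star}} \geq (1-\eps)\mu(G) - \Theta(\eps^2 \cdot k^{\star}/\eps) \geq (1 - O(\eps))\mu(G),
\]
and lifting super-edges back to their $G$-witnesses preserves this size. Outputting the largest such lifted matching across the $L$ copies therefore gives a $(1-\eps)$-approximation (after the usual relabeling of $\eps$). The per-update cost is $\sum_{k} T(n_k,\eps') + \polylog(n)/\eps$ (for maintaining the $G_k$'s), which is bounded by $O(\log n) \cdot T(n, \Theta(\eps^2)) + \poly(\log n /\eps)$.

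The main obstacle I expect is the \emph{adaptive adversary}: because $h_k$ is random, an adversary who observes the lifted matching could in principle reverse-engineer information about $h_k$ and then craft updates that break the sparsification guarantee. This is where the specific machinery of \cite{Kiss22} (adapted to the dynamic setting in \cite{BehnezhadG24}) is needed: the hash $h_k$ is refreshed at the end of short epochs, and during an epoch the adversary only sees matching edges that exist in $G$ anyway, revealing almost no information about $h_k$ beyond the existence of a witnessing $G$-edge per super-edge. A standard potential argument then shows that the cost of periodic rebuilds amortizes to $\poly(\log n /\eps)$ per update, yielding the claimed overall bound of $T(n,\Theta(\eps^2)) \cdot \poly(\log n /\eps)$.
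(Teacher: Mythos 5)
The paper does not prove this proposition---it imports it as a black box from the cited works (\cite{AssadiKLY16,AssadiKL16,ChitnisCEHMMV16,Kiss22}), only noting in the surrounding text that vertex contraction to $O(\mu(G)/\eps)$ super-vertices preserves a $(1-\eps)$-approximate matching. Your proposal reconstructs exactly that standard argument: geometric doubling over the unknown $\mu(G)$, random vertex contraction at scale $n_k = \Theta(k/\eps)$, running $\Algadd$ at additive slack $\eps' = \Theta(\eps^2)$ so that $\eps' n_{k^\star} = \Theta(\eps\,\mu(G))$, and lifting a super-matching back to $G$ via stored witnesses. So the approach is the correct one and is the same as what the cited sources do.

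A few points to tighten. First, your concentration step (McDiarmid on the surviving matching size) only gives deviation $O(\sqrt{k\log(1/\delta)})$; for this to be $O(\eps k)$ with $\delta = 1/\poly(n)$ you need $k \gtrsim \log(n)/\eps^2$, so the small-$k$ regime is not actually covered by the contraction argument and must be handled separately, e.g.\ by brute-force recomputation (which costs $\poly(\log n/\eps)$ per update precisely in that regime and hence is absorbed into the claimed bound). Second, $n_k = \Theta(k/\eps)$ should be capped at $n$ (for $k \gtrsim \eps n$ simply do not contract), otherwise $\sum_k T(n_k,\eps')$ is not obviously bounded by $O(\log n)\cdot T(n,\Theta(\eps^2))$. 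Third, you correctly flag the adaptive-adversary issue as the real subtlety, but the sentence ``the adversary only sees matching edges that exist in $G$ anyway'' is too quick: the output matching does leak that certain vertex pairs received distinct colors under $h_{k^\star}$. The fix you point to (epoch-based rehashing from \cite{Kiss22}, as used in \cite{BehnezhadG24}) is the right one, but the leakage/potential argument is where the real work of the cited sources lies, and a complete proof would need to spell it out rather than gesture at it. Since the paper itself treats the proposition as given, these are gaps only relative to a from-scratch proof, not relative to anything the paper does.
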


\subsection{An Auxiliary Lemma on ORS Graphs}\label{sec:ors-lemma}

For a matching $M$ in a graph $G$, we define the \textbf{maximum internal degree} $\indeg{M}$ of $M$, as the maximum degree of the induced subgraph $G[V(M)]$. 
This way, $\indeg{M} = 1$ iff $M$ is an induced matching, and in general, smaller the value $\indeg{M}$, the ``closer'' $M$ is to an induced matching. 

The following lemma is a simplification of a similar result in~\cite[Lemma 14]{BehnezhadG24} (which even works for average internal degree instead of maximum degree), using a simple variant of the rounding approach of~\cite{GoelKK12} for RS graphs. 

\begin{lemma}\label{lem:ORS}
	Let $G=(V,E)$ be any graph and $\MM := (M_1,\ldots,M_\rho)$ be an ordered set of matchings in $G$, each of size $\ell$, such that for every $i \in [\rho]$, $\indeg{M_i}$ in the subgraph of $G$ on edges $M_i \cup M_{i+1} \cup \ldots \cup M_{\rho}$ is some given $d_i \geq 1$. 
	Then, for every $\eta \in (0,1/100)$, 
	\[
		\sum_{i=1}^{\rho} \frac{1}{d_i} \leq \frac{34\log{n}}{\eta} \cdot \ORS{n}{(1-\eta) \cdot \ell}. 
	\]
\end{lemma}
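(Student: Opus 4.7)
I would adapt the Goel-Khanna-Kumar~\cite{GoelKK12} rounding argument (originally developed for RS graphs) to the ordered matching setting. The high-level idea is to bucket the indices $i \in [\rho]$ by the dyadic scale of $d_i$ and, within each bucket, extract via random sparsification an ORS substructure whose size is proportional to the bucket size divided by the bucket's degree scale, paying only an $\eta$-fraction loss in the matching size.

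\emph{Bucketing.} First, partition $[\rho]$ into the at most $\lfloor \log n \rfloor + 1$ buckets $B_j := \{i \in [\rho] : 2^j \leq d_i < 2^{j+1}\}$ (valid since $d_i \leq n-1$), and set $d := 2^j$. Since $1/d_i \leq 1/d$ for every $i \in B_j$, it suffices to establish that $|B_j|/d \leq (16/\eta) \cdot \ORS{n}{(1-\eta)\ell}$ for each $j$ and then sum over the $O(\log n)$ buckets.

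\emph{Random extraction within a bucket.} Fix a bucket, sample each $i \in B_j$ independently into a set $T$ with probability $p := \eta/(8d)$, and for each $i \in T$ define
\[
    M_i^* := \set{(u,v) \in M_i : \text{neither } u \text{ nor } v \text{ is incident to any edge of } M_{i'} \text{ with } i' \in T,\ i' > i}.
\]
By construction, the ordered sequence $(M_i^*)_{i \in T}$ already satisfies the ORS induced property out of the box: no endpoint of any edge in $M_i^*$ touches any edge of a later $M_{i'}$ with $i' \in T$, so $M_i^*$ is induced in $\bigcup_{i' \geq i,\, i' \in T} M_{i'}^*$. What remains is to show that enough of the $M_i^*$'s have size at least $(1-\eta)\ell$.

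\emph{Markov bound and conclusion.} Fix $i \in T$ and $e = (u,v) \in M_i$. Since $d_i < 2d$, each of $u,v$ has at most $d_i - 1 < 2d$ further incident edges in $M_{i+1} \cup \cdots \cup M_\rho$, giving $e$ at most $4d$ potential conflicts; each conflict lies in some later $M_{i'}$ sampled independently with probability $p$, so a union bound gives $\Pr[e \notin M_i^* \mid i \in T] \leq 4dp = \eta/2$. Hence $\Exp[|M_i \setminus M_i^*| \mid i \in T] \leq \eta\ell/2$, and Markov's inequality gives $\Pr[|M_i^*| \geq (1-\eta)\ell \mid i \in T] \geq 1/2$. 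Thus the expected number of $i \in T$ with $|M_i^*| \geq (1-\eta)\ell$ is at least $\eta |B_j|/(16d)$, and by averaging there exists a realization producing, after truncating each such $M_i^*$ to exactly $\lfloor(1-\eta)\ell\rfloor$ edges, an $n$-vertex ORS graph with at least $\eta |B_j|/(16d)$ matchings of size $\lfloor(1-\eta)\ell\rfloor$. This yields $|B_j|/d \leq (16/\eta) \cdot \ORS{n}{(1-\eta)\ell}$; summing over $j$ and using $16(\log n + 1) \leq 34 \log n$ (valid for $n \geq 2$) completes the proof.

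\emph{Main difficulty.} The argument is elementary once the bucketing is in place; the only delicate point is calibrating $p$ so that the per-matching retention probability (controlled via Markov) and the per-bucket yield $|T|$ balance to give the stated $\Theta(|B_j|/d)$ ORS size. Using Markov rather than Chernoff at the per-matching step is essential here: we need only existence of a good realization, and a concentration-based approach would introduce an unnecessary extra logarithmic factor in the final bound.
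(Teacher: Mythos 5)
Your proof takes essentially the same approach as the paper's: bucket the matchings by the dyadic scale of $d_i$, sample each matching in a bucket independently at rate $\Theta(\eta/d)$, and apply Markov's inequality to show that a constant fraction of the sampled matchings remain near-induced after deleting an $\eta$-fraction of their edges, which yields an ORS graph of the claimed size via the probabilistic method. The only differences are cosmetic: you sum the per-bucket contributions at the end where the paper isolates the dominant bucket via an $\arg\max$, and you apply Markov once (to the number of deleted edges in a sampled matching) where the paper applies it twice (to bad vertices and then to bad matchings); both routes give the same $O(\log n/\eta)$ overhead.
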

\begin{proof}
	For any integer $d \in \set{1,2,4,\ldots,n}$, define 
	$
		\MM(d) := \set{M_i \in \MM \mid d \leq d_i < 2d_i}.
	$
	Moreover, let 
	\[
	\dstar := \argmax_{d} \sum_{M_i \in \MM(d)} {1}/{d_i}.
	\]
	Since $\MM(1),\MM(2),\cdots$ partition $\MM$, we have that 
	\begin{align}
		\card{\MM(\dstar)} = \dstar \cdot \sum_{M_i \in \MM(\dstar)} \frac{1}{\dstar} \geq \dstar \cdot \sum_{M_i \in \MM(d)} \frac{1}{d_i} \geq \frac{\dstar}{\log{n}} \cdot \sum_{i=1}^{\rho} \frac{1}{d_i}. \label{eq:mm-dstar}
	\end{align}
	
	We now show that a random subset of $\MM(\dstar)$ indeed forms an ORS graph with induced matchings of size $(1-\eta) \cdot \ell$ which will be enough to conclude the proof. 
	
	Pick $\MMstar = (M_1,\ldots,M_{\rho'}) \subseteq \MM(\dstar)$  wherein each matching of $\MM(\dstar)$ is chosen with probability 
	\[
	p := \frac{\eta}{20\dstar}
	\]
	independently (with the sampled matchings ordered in the same manner as they were in $\MM$). 
	Fix any matching $M_i$ in $\MMstar$. For any vertex $v$ matched by $M_i$, define $\degstar_i(v)$ as the degree of $v$ in the subgraph $M_{i+1},\ldots,M_{\rho'} \in \MMstar$ (notice that this \emph{excludes} degree of $v$ in $M_i$ itself).  
	We have, 
	\[
		\expect{\degstar_i(v)} = \sum_{\substack{j > i \in \MM(\dstar) \\ \text{$v$ has an edge in $M_j$}}} \Pr\paren{\text{$M_j$ is chosen in $\MMstar$}} \leq 2\dstar \cdot \frac{\eta}{20\dstar} = \frac{\eta}{10},
	\]
	where in the inequality uses the fact that degree of all vertices in $\MM(\dstar)$ in the suffix matchings is at most $2\dstar$ in the entire $\MM$ (and thus among $\MM(\dstar)$ for sure). 
	
	We now say that $v$ is \textbf{bad} for $M_i$ iff $\degstar_i(v) \geq 1$. By the above calculation
	and a Markov bound,  the probability that $v$ is bad for $M_i$ is at most $\eta/10$. 
	This means that the expected number of bad vertices for $M_i$ is at most $2\ell \cdot \eta/10 = \ell \cdot \eta/5$. 
	
	We further say that the matching $M_i$ itself is \textbf{bad} if it has 
	at least $\eta \cdot \ell$ bad vertices. Another application of Markov bound implies that the probability $M_i$ is bad is at most $1/5$. 
	This means that in expectation, at least $4/5$ of the sampled matchings in $\MMstar$ are not bad. 
	
	By the probabilistic method (and since the size of $\MMstar$ is concentrated), 
	there exist a set of $3/5 \cdot p \cdot \card{\MM(\dstar)}$ matchings in $\MMstar$ none of which are bad. For each of these matchings, remove all their bad vertices which reduces 
	their size to $(1-\eta) \cdot \ell$ in the worst case and arbitrarily remove more edges such that all of them have size $(1-\eta) \cdot \ell$ exactly. By definition, the resulting graph is now an $(r,t)$-ORS graph 
	with parameters 
	\[
		r = (1-\eta) \cdot \ell, \quad \text{and} \quad  t = \frac{3}{5} \cdot \frac{\eta}{20 \dstar} \cdot \card{\MM(\dstar)}. 
	\]
	By definition, we have $t \leq \ORS{n}{(1-\eta) \cdot \ell}$ which implies that 
	\[
		\card{\MM(\dstar)} \leq \frac{34 \dstar}{\eta} \cdot \ORS{n}{(1-\eta) \cdot \ell}. 
	\]
	Plugging in this bound in~\Cref{eq:mm-dstar} concludes the proof. 
\end{proof}

% !TeX root = main.tex 
%!TEX root = main.tex

\section{An Opportunistic Sublinear-Time Algorithm for Matching}\label{sec:subtime} 

In this section, we provide one of the key subroutines used by our dynamic algorithm. This subroutine takes as input a ``base'' \emph{static} graph $G$ and a set $U$ with the promise that $G[U]$ is of size $\Omega(n)$, and outputs a matching of size $\Omega(n)$ in $G[U]$. While the worst-case runtime of this algorithm is (almost) linear in the number of edges of the base graph, it can be much better if the maximum internal degree of the matching it outputs is large. 

\begin{lemma}\label{thm:subtime}
	There is an algorithm (\Cref{alg:subtime}) that given an $n$-vertex graph $G=(V,E)$ with $m$ edges, parameters $\gamma,\delta \in (0,1/6)$, and vertices $U \subseteq V$ with $\mu(G[U]) \geq \delta n$, 
	with high probability returns a matching $M$ in $G[U]$ with size at least $\gamma \cdot \delta n$ in $O(m \cdot n^{3\gamma} \cdot \log{(n)}/\indeg{M})$ time. 
\end{lemma}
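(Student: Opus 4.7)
The plan is to design an adaptive sampling algorithm that geometrically doubles over a guessed value of the eventual matching's maximum internal degree. This extends the opportunistic sublinear-time algorithm of~\cite{BehnezhadG24}---which gives a bound in terms of \emph{average} internal degree---to one that depends on the \emph{maximum} internal degree $\indeg{M}$.

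The algorithm processes phases $d = n, n/2, n/4, \ldots, 1$ in decreasing order. In phase $d$, it invokes the following subroutine with a time budget $B_d := \Theta(m \cdot n^{3\gamma} \log(n) / d)$: repeatedly sample a random edge of $G$ (using the $O(1)$-time random-vertex-and-random-neighbor primitives of~\Cref{sec:notation}) and add it to a greedily growing matching $M$ whenever both endpoints lie in $U$ and are currently unmatched. If at some point within the budget the subroutine produces $|M| \geq \gamma \delta n$, the whole algorithm halts and returns this $M$; otherwise it proceeds to the next smaller $d$.

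For correctness, I would first observe that by~\Cref{fact:greedy}, so long as $|M| < \gamma \delta n$ one has $\mu(G[U \setminus V(M)]) \geq \mu(G[U]) - |M| \geq (1-\gamma)\delta n = \Omega(n)$, so the residual graph always contains plenty of edges available to the greedy process. I would then argue that for the ``correct'' phase $d^* = \Theta(\indeg{M})$ (where $M$ is the final output matching), the subroutine finds $\gamma \delta n$ matched edges within its budget with high probability. The idea is to relate the per-sample success probability to the density of $G[V(M)]$: when the final matching has maximum internal degree $d^*$, the vertex set $V(M)$ supports at most $d^* \cdot \gamma \delta n$ internal edges, and this density parameter controls how fast random edge samples can be absorbed into $M$. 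A Chernoff bound over the $B_{d^*}$ samples, together with a union bound over the $O(\log n)$ phases and over intermediate snapshots of $M$, yields high-probability success; the overshoot factor $n^{3\gamma}$ in $B_d$ serves precisely as the slack required to amplify the concentration beyond $1 - 1/\poly(n)$.

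For the runtime, the budgets $B_d$ form a geometric sequence in $1/d$, so the total time spent over all attempted phases up to and including the successful phase $d^*$ is $\sum_{d \geq d^*} O(B_d) = O(B_{d^*}) = O(m \cdot n^{3\gamma} \log(n) / \indeg{M})$, as claimed. The main obstacle in this plan is the concentration argument inside a single phase: precisely quantifying how the per-sample success probability of random greedy is controlled by the \emph{maximum} (rather than average) internal degree of the matching that will eventually be produced. I expect the cleanest route is via a potential-function argument that pairs each additional matched edge with a ``density certificate'' in $G[V(M)]$ and charges the sampling cost to that certificate; once such a charging scheme is set up, the $1/\indeg{M}$ factor should drop out of the per-phase analysis naturally, and the remaining ingredients (residual matching lower bound, doubling over $d$, Chernoff+union bound) are standard.
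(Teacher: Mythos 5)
Your random-edge-sampling plan does not control the \emph{maximum} internal degree of the output matching, which is the entire point of the lemma, and the gap you acknowledge as ``the main obstacle'' is not a detail that can be filled in but a structural flaw in the approach. The issue is that your argument is circular: you need to show that when the algorithm halts in the phase with budget $B_{d^*}$, the returned matching satisfies $\indeg{M}=\Theta(d^*)$, but $\indeg{M}$ is a property of $G[V(M)]$, and with global random-edge sampling there is no mechanism forcing $V(M)$ to avoid high-degree structure. A vertex $v$ can easily end up in $V(M)$ with many neighbors inside $V(M)$ simply because the random edge samples rarely touched $v$'s neighborhood; the per-sample success rate you invoke is governed by the \emph{current} residual graph $G[U\setminus V(M_{\text{cur}})]$, not by the eventual $G[V(M)]$, so the ``density certificate'' charging scheme you sketch has no place to anchor. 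Equivalently, it is entirely consistent with your algorithm that it runs all the way down to a small-$d$ phase (spending $\approx m\,n^{3\gamma}\log n$ time) yet returns a matching $M$ with $\indeg{M}=\Theta(n)$, violating the claimed bound.

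The paper's~\Cref{alg:subtime} sidesteps this by sampling \emph{per vertex}: in iteration $i$, each vertex $v\in X_i$ independently keeps each neighbor with probability $p_i$, then greedy matching runs over the sampled edges, and $X_{i+1}=X_i\setminus V(M_i)$. The key is the ``residual sparsity'' argument (\Cref{clm:subtime-density}): any vertex of $X_{i-1}$ with more than $\Theta(\log n/p_{i-1})$ available neighbors gets at least one sampled neighbor whp and is therefore matched when greedy reaches it; hence every vertex that survives into $X_i$ has degree $O(n^{3\gamma}\log n/p_i)$ in $G[X_i]$. Since the returned $M_i$ has $V(M_i)\subseteq X_i$, this gives $\indeg{M_i}\le$ this max-degree bound, and combining with the $O(m\,p_i)$ runtime per iteration (\Cref{clm:subtime-runtime}) and the fact that the $p_j$ form a geometric series yields the $O(m\,n^{3\gamma}\log n/\indeg{M})$ bound. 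This per-vertex ``push'' — every high-degree vertex gets matched — is exactly what your global ``pull'' sampling lacks. If you want something closer to your strategy, note \Cref{rem:exponential-probability}: the same algorithm with a union-bound argument yields an \emph{average}-degree guarantee, which is more in the spirit of what random sampling can hope to prove, but that weaker statement does not suffice for the clean ORS argument in \Cref{clm:base-case-ORS} and \Cref{clm:recursive-ORS}.
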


We note that the \emph{worst-case} bound of $\Omega(m)$ in~\Cref{thm:subtime}
is necessary due to a lower bound of~\cite{AssadiCK19} (as opposed to~\Cref{prop:subtime-size} for \emph{size} estimation); however, in the hard instances of that lower bound, one necessarily needs to find a large matching with maximum internal degree $O(1)$, which means, one cannot benefit from the extra power of this lemma (as expected).

This result is inspired by~\cite[Lemma 9]{BehnezhadG24} and generalizes and strengthens it. Algorithm of~\cite{BehnezhadG24} finds a matching of size $\gamma \cdot \delta n$ in $O(n^{2+O(\gamma)}/d)$ time (thus, does not benefit from the number of edges of $G$) and moreover $d$ is the \emph{average} degree of $G[V(M)]$ instead of our stronger guarantee on maximum degree. To obtain our result, we use an argument similar in spirit 
to that of ``residual sparsity guarantee'' of the greedy algorithm used for maximal independent set and maximal matching problems, e.g., in~\cite{AhnCGMW15,Konrad18,AssadiOSS19}.

The algorithm in~\Cref{thm:subtime} works by sampling the edges of $G$ with geometrically increasing probabilities and only consider edges 
of $G$ that are sampled and belong to $G[U]$. It then attempts to find a ``large'' matching in this sampled graph using the greedy algorithm; if it succeeds, it returns \emph{this} matching only, otherwise, it will remove vertices of this matching and continues 
to the next sampling phase. Formally, the algorithm is as follows. 

\begin{Algorithm}[The algorithm of~\Cref{thm:subtime}]\label{alg:subtime}
~  

\noindent
{\bf Input:} A graph $G(V,E)$, a set $U \subseteq V$, s.t. $\mu(G[U]) \ge \delta n$, and $\gamma,\delta \in (0,1/6)$.\\
{\bf Output:} A matching $M$ in $G[U]$ of size at least $\gamma \cdot \delta n$.

	\begin{itemize}[leftmargin=10pt]
		\item Let $X_1 = U$. For $i=1$ to $\Gamma := 1/(3\gamma)$ iterations: 
		\begin{enumerate}
			\item Let the sampling probability be $p_i := n^{(3\gamma) \cdot i}/n$. For every vertex $v \in X_i$, pick a set $N_i(v) \subseteq N(v)$ by sampling each neighbor of $v$ in $G$ independently 
			 with probability $p_i$. 
			\item Start with $M_i = \emptyset$. Iterate over vertices of $X_i$ in a fixed order, and for each $v \in X_i$, if both $v$ and some $w$ in $N_i(v) \cap X_i$ are unmatched by $M_i$, then add the edge $(v, w)$ to $M_i$.  
			\item If $\card{M_i} \geq \gamma \cdot \delta n$, return $M_i$ and {terminate}. Otherwise, let $X_{i+1} = X_i \setminus V(M_i)$, and continue to the next sampling step. 
		\end{enumerate}
	\end{itemize}
\end{Algorithm}

We start by arguing that the algorithm always returns a matching in one of its iterations. 

\begin{claim}\label{clm:subtime-terminate}
	\Cref{alg:subtime} always returns a matching $M_i$ of size at least $\gamma \cdot \delta n$ in some  $i \in  [\Gamma]$. 
\end{claim}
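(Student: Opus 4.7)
The plan is to treat this claim as a purely deterministic sanity check on the parameters of~\Cref{alg:subtime}, with no probabilistic content. The key observation I would use is that at iteration $i = \Gamma$ the sampling probability collapses to
\[
p_\Gamma \;=\; n^{(3\gamma)\cdot \Gamma}/n \;=\; n/n \;=\; 1,
\]
so $N_\Gamma(v) = N(v)$ for every $v$ and the inner loop becomes a plain sequential greedy on $G[X_\Gamma]$. My goal therefore reduces to showing that $G[X_\Gamma]$ still contains a matching of size $\Omega(\delta n)$, so that even a maximal matching in it meets the return threshold $\gamma \cdot \delta n$.

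First I would control how much the set $X_i$ can shrink across rounds that fail to trigger the return condition. If iterations $1,\ldots,i-1$ all failed, then each returned matching $M_j$ has $|M_j| < \gamma \delta n$, so removing its vertices cuts at most $2\gamma \delta n$ vertices out of $X_j$. Summing over the at most $\Gamma-1 < 1/(3\gamma)$ previous rounds, the total number of vertices lost from $U$ to $X_i$ is strictly less than $2\gamma \delta n \cdot \Gamma = \tfrac{2}{3}\delta n$. Since removing any single vertex decreases the maximum matching by at most one, this gives $\mu(G[X_i]) \geq \mu(G[U]) - |U \setminus X_i| > \delta n - \tfrac{2}{3}\delta n = \delta n/3$.

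Finally, applying this bound at $i = \Gamma$ together with the observation that the inner loop at iteration $\Gamma$ produces a maximal matching in $G[X_\Gamma]$, I get
\[
|M_\Gamma| \;\geq\; \tfrac{1}{2}\mu(G[X_\Gamma]) \;>\; \delta n / 6 \;\geq\; \gamma \cdot \delta n,
\]
where the last inequality uses the hypothesis $\gamma \leq 1/6$. Hence the return condition must be triggered at iteration $\Gamma$ at the latest.

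I do not anticipate any real obstacle: the claim is a bookkeeping check that the choice $\Gamma = 1/(3\gamma)$ is simultaneously (i) large enough to drive $p_\Gamma$ up to $1$, and (ii) small enough that the cumulative damage done to $U$ by unsuccessful rounds leaves a matching of size $\Theta(\delta n)$ intact in $X_\Gamma$; both conditions are comfortably satisfied for $\gamma, \delta \leq 1/6$.
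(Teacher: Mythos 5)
Your proof is correct and follows exactly the same structure as the paper's: observe that $p_\Gamma = 1$ so iteration $\Gamma$ is plain greedy on $G[X_\Gamma]$, bound the total vertices lost to the at most $\Gamma-1$ failed rounds by $2\gamma\delta n\cdot\Gamma = \tfrac{2}{3}\delta n$, conclude $\mu(G[X_\Gamma])\geq\delta n/3$, and invoke the $1/2$-approximation of greedy together with $\gamma < 1/6$. The only cosmetic difference is that you phrase the matching-size loss via "remove one vertex, lose at most one matched edge," while the paper tracks it directly; these are the same argument.
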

\begin{proof}
	Suppose the algorithm has not terminated until the beginning of the last iteration. This means that for all $i < \Gamma$, we have, $\card{M_i} < \gamma \cdot \delta n$. Thus, the total number 
	of vertices from $U$ that are removed until reaching $X_{\Gamma}$ is 
	\[
		\sum_{i=1}^{\Gamma-1} \card{V(M_i)} <  \Gamma \cdot 2 \cdot (\gamma \cdot \delta n) = 2\delta n/3,
	\]
	given $\Gamma = 1/(3\gamma)$. 
	Given that $G[U]$ has a matching of size at least $\delta n$ by the theorem statement, there is still a matching of size at least $\delta n - 2\delta n/3 = \delta n/3$ inside $G[X_{\Gamma}]$ (after removing the vertices counted above). 
	But, in iteration $i = \Gamma$, every edge is sampled with probability $n^{3\gamma \cdot 1/(3\gamma)}/n = 1$ and thus $N_i(v)$ is the entire neighborhood of $v$ in $G[X_{\Gamma}]$.  Hence, the greedy algorithm necessarily finds a matching of size at least $\delta n/6 > \gamma \cdot \delta n$ in this case (since $\gamma < 1/6$). Thus the algorithm terminates in this last iteration and outputs $M_{\Gamma}$ as the answer.  
\end{proof}

The next step is to bound the runtime of the algorithm based on the iteration it terminates in. 

\begin{claim}\label{clm:subtime-runtime}
	With high probability, for every $i \in [\Gamma]$, if~\Cref{alg:subtime} returns the matching $M_i$ and terminates in this iteration, then its runtime is $O(m \cdot p_i)$. 
\end{claim}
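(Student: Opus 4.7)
The plan is to bound the number of sampled edges in each iteration via Chernoff concentration, charge the running time of each iteration to that count, and then collapse the sum across iterations using the geometric growth of $p_j$.

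I would first argue that the total size $S_j := \sum_{v \in X_j} |N_j(v)|$ of the neighbor samples drawn in iteration $j$ satisfies $S_j = O(m p_j + \log n)$ with probability $1 - n^{-c}$ for any chosen constant $c$. This is because $S_j$ is a sum of at most $2m$ independent Bernoulli$(p_j)$ random variables (each edge of $G$ contributes at most twice, once from each of its endpoints), so $\mathbb{E}[S_j] \le 2 m p_j$ and a multiplicative Chernoff bound yields the desired tail. A union bound over the constantly many iterations $j \in [\Gamma]$ preserves this guarantee for all $j$ simultaneously.

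Next I would charge the running time of iteration $j$ to $S_j$. Using the array representation from \Cref{sec:notation} together with a standard geometric-skip sampling routine, the set $N_j(v)$ can be produced in expected time $O(1 + \deg(v) \cdot p_j)$ for each $v \in X_j$; the greedy pass then touches every sampled neighbor once, using a hash set to represent $X_j$ and a Boolean flag per vertex to mark membership in $V(M_j)$. Maintaining $X_{j+1} = X_j \setminus V(M_j)$ across iterations costs $O(|V(M_j)|)$ per step, contributing only $O(n)$ globally. Hence iteration $j$ runs in time $O(m p_j)$ with high probability, up to an $O(n)$ overhead that is amortized across the whole execution.

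Finally, summing the per-iteration bound from $j = 1$ to $j = i$ and exploiting that $p_{j+1}/p_j = n^{3\gamma} \ge 2$, so the sequence $(p_j)$ is geometric with $\sum_{j=1}^{i} p_j \le 2 p_i$, the total running time is $O(m p_i)$, matching the claim. The main delicate point---and the only real obstacle---is ensuring that the Chernoff concentration of $S_j$ holds simultaneously for every $j \in [\Gamma]$, which the union bound over the constantly many iterations handles; any lower-order additive terms are absorbed by the $\log n$ slack present in the final runtime bound of \Cref{thm:subtime}.
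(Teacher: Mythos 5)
Your proposal is correct and takes essentially the same route as the paper: a Chernoff bound on the number of sampled edges per iteration, an implementation of the Bernoulli-sampling step in time proportional to the sample size (you invoke geometric-skip sampling; the paper's footnote draws the binomial count $k_v$ first and then samples $k_v$ uniform neighbors, which is equivalent), and then a sum over the geometrically growing $p_j$. The only minor imprecision—shared with the paper—is relying on $p_{j+1}/p_j = n^{3\gamma} \ge 2$; this can fail for tiny $\gamma$, but it is immaterial since there are only $\Gamma = 1/(3\gamma) = O_\gamma(1)$ iterations, each costing at most $O(m\,p_i)$.
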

\begin{proof}
	Firstly, with high probability, the runtime of the algorithm in each iteration $j \in [\Gamma]$ is $O(m \cdot p_j)$. This is because with high probability, the number of edges in $G_j$ is $O(m \cdot p_j)$ by a simple application of Chernoff bound, 
	and the edges can be sampled in this much time using standard ideas instead of explicitly going over each edge and sampling them\footnote{For each vertex $v$, first sample a number $k_v$ from the binomial distribution of $\deg(v)$ and $p_i$ (using its closed-form formula); then, sample $k_v$ neighbors of $v$ uniformly at random from $N(v)$.}. Running the greedy matching algorithm also take another $O(m \cdot p_j)$ time now. 
	Finally, since $p_j$'s form a geometric series (as $p_{j+1} = n^{3\gamma} \cdot p_j$), we have $\sum_{j=1}^{i} p_j = O(p_i)$ which concludes the proof. 
\end{proof}

We now bound $\indeg{M_i}$ for the matching $M_i$ returned by the algorithm. Instead of bounding the maximum internal degree of the matching itself, we simply bound the maximum degree of the subgraph $G[X_i]$ where the matching $M_i$ is chosen from. 

\begin{claim}\label{clm:subtime-density}
	With high probability, for every $i \in [\Gamma]$, max-degree of $G[X_i]$ is $O(n^{3\gamma} \cdot \log{(n)}/p_i)$. 
\end{claim}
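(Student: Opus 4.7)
The plan is to union-bound over iterations $i \in [\Gamma]$ and vertices $v$. The base case $i = 1$ is immediate, since $X_1 = U$ has max-degree at most $n-1 = O(n^{3\gamma}\log(n)/p_1)$. For $i \geq 2$, I fix a vertex $v$, set $D := C\log(n)/p_{i-1} = \Theta(n^{3\gamma}\log(n)/p_i)$ for a sufficiently large constant $C$, and aim to show that the bad event $F(v) := \{v \in X_i\} \cap \{\deg_{G[X_i]}(v) \geq D\}$ has probability at most $n^{-C}$. The structural core of the argument is the following: any $w \in X_i$ must have been unmatched throughout iteration $i-1$'s greedy execution, since otherwise $w \in V(M_{i-1})$, contradicting $w \in X_i$. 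So if any $w \in N_G(v) \cap X_i$ also belongs to $N_{i-1}(v)$, then when greedy processes $v$, both $v$ (which is in $X_i$, hence still unmatched at that step) and $w$ are unmatched and $w \in N_{i-1}(v) \cap X_{i-1}$, so greedy would have matched $v$, contradicting $v \in X_i$. Hence on $F(v)$ we deterministically have $N_G(v) \cap X_i \cap N_{i-1}(v) = \emptyset$.

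To convert this deterministic implication into a probability bound despite the mutual dependence between the random sets $X_i$ and $N_{i-1}(v)$, I would use a deferred-decision conditioning: fix all randomness through iteration $i-2$ (this fixes $X_{i-1}$) and all iteration-$(i-1)$ coin flips except those defining $N_{i-1}(v)$, namely the independent Bernoulli$(p_{i-1})$ variables $\{c^{(i-1)}_{v,w} : w \in N_G(v)\}$. Under this conditioning, the partial matching $M'_{i-1}$ produced by greedy before processing $v$ is determined, and so is the candidate set $T := N_G(v) \cap X_{i-1} \cap U'_v$, where $U'_v \subseteq X_{i-1}$ is the set of vertices still unmatched at $v$'s processing step. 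All subsequent greedy decisions are also determined by the conditioning once $v$'s own outcome is fixed, since they depend only on later vertices' own coins. Since $X_i \subseteq U'_v$ (unmatched at the end implies unmatched at every intermediate step), we have $N_G(v) \cap X_i \subseteq T$; so on $F(v)$ one simultaneously has $|T| \geq D$ and $c^{(i-1)}_{v,w} = 0$ for every $w \in T$. The latter has conditional probability $(1-p_{i-1})^{|T|} \leq (1-p_{i-1})^D \leq e^{-p_{i-1} D} = n^{-C}$.

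Taking expectation over the conditioning yields $\Pr[F(v)] \leq n^{-C}$, and a union bound over $v \in V$ and $i \in [\Gamma]$ completes the claim for $C$ chosen large enough. The main obstacle, and the reason for the particular conditioning, is precisely the two-way coupling between $X_i$ and $N_{i-1}(v)$: $X_i$ is a function of every iteration-$(i-1)$ sample \emph{including} $v$'s own, while $N_{i-1}(v)$ is exactly the quantity we want to control. Exposing every iteration-$(i-1)$ coin except those of $v$ severs this link and leaves $|T| \geq D$ genuinely independent Bernoulli$(p_{i-1})$ coins that must \emph{all} fail on $F(v)$, which is what powers the concentration bound.
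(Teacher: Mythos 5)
Your proof is correct and follows essentially the same approach as the paper's: split on whether the residual degree of $v$ at its processing step is small or large, and in the large case argue that $v$ would have been matched with high probability. The main difference is presentational rigor — the paper informally conditions on the state of greedy at the moment $v$ is processed (which is determined by coins of vertices preceding $v$), whereas you make this explicit via a deferred-decisions argument that fixes all iteration-$(i-1)$ coins except $v$'s own; this is the same decoupling the paper relies on, just spelled out carefully.
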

\begin{proof}
	The claim trivially holds for $i=1$ since $p_1 = n^{3\gamma}/n$ and $G[X_1]$ can only have maximum degree $n < n\log{(n)} = n^{3\gamma}\log{(n)}/p_1$.  
	We focus on the $i > 1$ case in the following. 
	
	Fix any vertex $v$ in $G[X_{i-1}]$. Consider the step wherein $v$ is being processed by the greedy algorithm. 
	First, suppose that the degree of $v$ to vertices in $X_{i-1} \setminus V(M_{i-1})$ at this point is at most $100\ln{(n)}/p_{i-1}$. In this case, 
	even if $v$ remains unmatched,  its degree in $G[X_i]$ will be $O(\log{(n)}/p_{i-1}) = O(n^{3\gamma} \cdot \log{(n)}/p_i)$ as desired. 
	
	On the other hand, consider the case where degree of $v$ to vertices of $X_{i-1} \setminus V(M_{i-1})$ is at least $100\ln{(n)}/p_{i-1}$ when we start processing $v$. 
	Then, the probability that none of these neighbors are sampled in $N_i(v)$ is at most 
	\[
		(1-p_{i-1})^{100\log{(n)}/p_{i-1}} \leq e^{-100\ln{n}} = n^{-100}; 
	\]
	thus, with high probability, at least one of these vertices is sampled in $N_i(v)$. Conditioned on this event, $v$ will surely get matched (if it was not already matched) by the greedy algorithm. 
	Taking a union bound over all vertices now ensures that with high probability, every vertex in $X_{i-1}$ that remains unmatched by $M_{i-1}$ has degree $O(n^{3\gamma} \cdot \log{(n)}/p_i)$ to other unmatched 
	vertices in $X_{i-1}$, and hence in the graph $G[X_i]$. This concludes the proof. 
\end{proof}

We are now ready to conclude the proof of~\Cref{thm:subtime}. 

\begin{proof}[Proof of~\Cref{thm:subtime}]
	By~\Cref{clm:subtime-terminate}, we know~\Cref{alg:subtime} terminates in some iteration $i \in [\Gamma]$ and returns a matching $M_i$ of size at least $\gamma \cdot \delta n$ by the termination condition. 
	By~\Cref{clm:subtime-runtime}, this takes $O(m \cdot p_i)$ time with high probability. 
	
	Finally, since $V(M_i)$ is a subset of $X_i$, we obtain that $\indeg{M_i}$ is at most the maximum degree of $G[X_i]$ 
	which itself is at most $O(n^{3\gamma} \cdot \log{(n)}/p_i)$ with high probability by~\Cref{clm:subtime-density}. Thus, the runtime of the algorithm is $O(m \cdot n^{3\gamma} \cdot \log{(n)}/\indeg{M_i})$ with high probability as desired. 
\end{proof}

%%\begin{remark}\label{rem:exponential-probability}
%%\emph{One can also prove an alternate form of~\Cref{thm:subtime} where the runtime of~\Cref{alg:subtime} is $O(m \cdot n^{3\gamma}/\bar{d})$ but $\bar{d}$ is now the \emph{average} internal degree of the returned matching instead of the maximum, 
%%	and the failure probability is now exponentially small. This means a weaker guarantee in terms of average degree instead of maximum degree, but a stronger guarantee on the probability of success of the algorithm. Since this is not needed
%%	by our algorithm, we omit the details.} 
%%%%	
%%%%\emph{This is done by proving that the subgraph $G[X_i]$ in~\Cref{alg:subtime} only has $O(n^{1+3\gamma}/p_i)$ edges by a union bound over all possible subgraphs of $G[X_{i-1}]$ (this 
%%%%	corresponds to replacing the  ``residual sparsity guarantee'' of~\cite{AhnCGMW15,Konrad18,AssadiOSS19} in the above argument by the ``filtering'' technique of~\cite{LattanziMSV11} instead).}
%%\end{remark}

% !TeX root = main.tex 
%!TEX root = main.tex

\clearpage

\newcommand{\Gold}{\ensuremath{G_{\textnormal{old}}}}

\newcommand{\Gbatch}{\ensuremath{G_{\textnormal{batch}}}}

\newcommand{\Gmatch}{\ensuremath{G_{\textnormal{match}}}}

\newcommand{\Mold}{\ensuremath{M_{\textnormal{old}}}}

\newcommand{\Mbatch}{\ensuremath{M_{\textnormal{batch}}}}

\newcommand{\Mmatch}{\ensuremath{M_{\textnormal{match}}}}

\section{A Key Intermediate Dynamic Problem}\label{sec:intermediate}

In order to prove our main result, we focus on solving the following intermediate problem. Similar versions of this problem also appear in recent work including~\cite{BhattacharyaKS23,Liu24,BehnezhadG24}. 

\begin{Problem}\label{prob:key}
	The problem is parameterized by integers $n,m,q \geq 1$ and reals $\gamma,\delta,\alpha \in (0,1)$. We have a fully dynamic $n$-vertex graph $G=(V,E)$ that starts empty, i.e., has $E = \emptyset$, and throughout, never has more than $m$ edges,
	nor receives more than $\poly(n)$ updates in total. 
	
	\medskip
	
	\emph{\textbf{Updates:}}  The updates to $G$ happen in \emph{\textbf{chunks}} $C_1,C_2,\ldots$, each consisting of exactly $\alpha \cdot n$ edge insertions or deletions in $G$. 
	
	\emph{\textbf{Queries:}} After each chunk, there will be at most $q$ queries, coming one at a time and in an adaptive manner (based on the answer to all prior 
	queries including the ones in this chunk).  Each query is a set $U \subseteq V$ of vertices with the promise that $\mu(G[U]) \geq \delta n$; the algorithm should respond
	with a matching of size at least $\gamma \cdot \delta n$ from $\mu(G[U])$. 
	
	For ease of reference, we list the parameters of this problem and their definitions: 
	\begin{align*}
		n &: \text{number of vertices in the graph}; \\
		m &: \text{maximum number of edges at any point present in the graph}; \\
		q &: \text{number of adaptive queries made after each chunk}; \\
		\gamma &: \text{approximation ratio of the returned matching for each query}; \\
		\delta &: \text{a lower bound on the fraction of vertices matched in the subgraph of $G$ for the query}; \\
		\alpha &: \text{a parameter for determining the size of each chunk as a function of $n$}.
	%	&\textbf{assumption:} we further assume both $\gamma,\delta \leq \alpha$.
	\end{align*}
	
	\noindent
	For technical reasons, we allow additional updates, called \emph{\textbf{empty updates}} to also appear in the chunks but these ``updates'' do not change any edge of the graph, although will be 
	counted toward the number of updates in their chunks\footnote{This is used for simplifying the exposition when solving this problem recursively; these empty updates will still be counted when computing amortized runtime of these 
	recursive algorithms.}. 
\end{Problem}

We will design a family of recursive algorithms for solving~\Cref{prob:key} in this section, starting with the base case, which also acts as a good warm-up for the key ideas of the algorithm. 

\subsection{Base Case}\label{sec:base}

The proof of the following lemma follows a similar approach as used in~\cite{BehnezhadG24} albeit with several modifications to take into account the dependence on the (overall) sparsity of the input graph and to match
the requirements of~\Cref{prob:key}. This lemma effectively gives an algorithm with $\approx \sqrt{(m/n \cdot \ORS{n}{\Theta(n)})}$ update time for solving~\Cref{prob:key}. %Formally, 

\begin{lemma}\label{lem:base-case}
	There is an algorithm (\Cref{alg:base-case}) for~\Cref{prob:key} that with high probability takes
	\[
		O\Paren{q \cdot \sqrt{\frac{m \cdot n^{6\gamma} \cdot \ORS{n}{\gamma \cdot \delta n/2}}{\alpha \cdot n}}},
	\]
	amortized time over the updates to maintain the answer to all given queries in each chunk. The algorithm works as long as $\gamma < 1/12$ and $\alpha \geq \gamma \cdot \delta$. 
\end{lemma}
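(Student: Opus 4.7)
The plan is to adapt the batched framework of~\cite{BehnezhadG24} sketched at the end of~\Cref{sec:intro}, but to replace their $O(n^{2+O(\gamma)}/d)$ opportunistic subroutine by the sparsity-sensitive~\Cref{thm:subtime} so that every sublinear call scales with the current edge count $m$ rather than with $n^{2}$. Concretely, I will group the input stream into batches of $s := B \cdot \alpha n$ updates (i.e.\ $B$ chunks) for a parameter $B$ to be optimized later, and inside each batch maintain three edge sets: $\Gold$, a snapshot of $G$ at the start of the batch from which edges are only ever removed (either because the adversary deletes them, or because a sublinear call identifies them as part of a returned matching); $\Gbatch$, the set of edges inserted into $G$ during this batch; and $\Gmatch$, the collection of matchings transferred out of $\Gold$ by sublinear calls during the batch. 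At all times $G = (\Gold \cup \Gbatch) \setminus (\text{adversarial deletions})$ and $\Gmatch \subseteq \Gold \subseteq G$, so every matching returned is a valid matching of the current $G$.

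To answer a query $U$, I first run the greedy algorithm of~\Cref{fact:greedy} on $(\Gbatch \cup \Gmatch)[U]$; if the resulting matching has size at least $\gamma \cdot \delta n$, I return it. Otherwise, \Cref{fact:greedy} gives $\mu((\Gbatch \cup \Gmatch)[U]) < 2\gamma\delta n$, and since $E(G[U]) \subseteq E(\Gbatch \cup \Gmatch)[U] \cup E(\Gold \setminus \Gmatch)[U]$, the promise $\mu(G[U]) \geq \delta n$ forces $\mu((\Gold \setminus \Gmatch)[U]) \geq (1-2\gamma)\delta n$. I then invoke~\Cref{thm:subtime} on the static graph $\Gold \setminus \Gmatch$ with rescaled parameters $\delta' = (1-2\gamma)\delta$ and $\gamma' = \gamma/(1-2\gamma)$ so that the output matching $M_i$ has size at least $\gamma\delta n$, move $E(M_i)$ from $\Gold$ into $\Gmatch$, and return $M_i$. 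The hypothesis $\gamma < 1/12$ ensures $\gamma' < 1/6$ with enough slack to meet the requirement of~\Cref{thm:subtime}.

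For the analysis, the greedy cost of a query is $O(|E(\Gbatch)| + |E(\Gmatch)|)$; throughout a batch $|E(\Gbatch)| \leq s = B\alpha n$ and $|E(\Gmatch)|$ grows by at most $\gamma\delta n$ per sublinear call. Summing over the $Bq$ queries in the batch and using $\alpha \geq \gamma\delta$, the total greedy cost is $O(B^{2} q^{2} \alpha n)$. For the sublinear cost, let $M_1, M_2, \ldots$ be the matchings produced by the successive invocations of~\Cref{alg:subtime}, each of size $\ell = \gamma\delta n$, and let $d_i := \indeg{M_i}$ measured in $\Gold \setminus (M_1 \cup \cdots \cup M_{i-1})$ at the moment $M_i$ is found; by~\Cref{thm:subtime} the $i$-th call costs $O(m \cdot n^{3\gamma}\log(n)/d_i)$. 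Crucially, every later $M_j$ ($j > i$) already lies in $\Gold \setminus (M_1 \cup \cdots \cup M_{i-1})$ at the time $M_i$ is produced, because $M_j$ will itself be extracted from that set later and (adversarial) deletions can only shrink it; hence $d_i$ upper-bounds the max internal degree of $M_i$ in the subgraph on $M_i \cup M_{i+1} \cup \cdots$, and~\Cref{lem:ORS} with $\eta = 1/2$ yields $\sum_i 1/d_i = O(\log(n) \cdot \ORS{n}{\gamma\delta n/2})$. So the total sublinear cost per batch is $O(m \cdot n^{3\gamma} \log^{2}(n) \cdot \ORS{n}{\gamma\delta n/2})$.

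Dividing both bounds by $s = B\alpha n$ and balancing the amortized terms $O(m n^{3\gamma} \log^{2}(n)\cdot \ORS{n}{\gamma\delta n/2}/(B\alpha n))$ and $O(Bq^{2})$ via $B \asymp \sqrt{m n^{3\gamma}\log^{2}(n)\cdot \ORS{n}{\gamma\delta n/2}/(q^{2}\alpha n)}$ then delivers the claimed amortized update time, with the extra $\log^{2}(n)$ absorbed by the gap between $n^{6\gamma}$ (inside the square root in the lemma statement) and $n^{3\gamma}$ (coming from~\Cref{thm:subtime}). The main obstacle is the last point used above: one must argue that the sequence $M_1, M_2, \ldots$ genuinely satisfies the hypotheses of~\Cref{lem:ORS} even though $\Gold$ is not static inside a batch, so the bookkeeping has to carefully track which edges survive adversarial deletions between sublinear invocations and verify that such deletions only \emph{decrease} the residual degrees attached to later $M_j$'s, preserving the ordered-induced structure needed by~\Cref{lem:ORS}.
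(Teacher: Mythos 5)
Your proposal is correct and follows essentially the same approach as the paper: the same batch decomposition into $\Gold$, $\Gbatch$, $\Gmatch$, the same use of \Cref{thm:subtime} as the opportunistic subroutine, the same observation that the decremental nature of $\Gold$ makes the extracted matchings $M_1,M_2,\ldots$ satisfy the ordered-induced-degree hypothesis of \Cref{lem:ORS}, and the same balancing of greedy versus sublinear cost to pick the batch size. The only (cosmetic) deviations are that you run one greedy pass on $(\Gbatch\cup\Gmatch)[U]$ instead of two separate greedy passes, which tightens the residual bound to $\mu(\Gold[U]) \ge (1-2\gamma)\delta n$ rather than the paper's $(1-4\gamma)\delta n \ge \delta n/2$, and that you rescale the sublinear-call parameters to $(\delta',\gamma') = ((1-2\gamma)\delta,\ \gamma/(1-2\gamma))$ rather than the paper's $(\delta/2,\ 2\gamma)$ — both are valid under $\gamma<1/12$ and both fit within the advertised $n^{6\gamma}$ factor, so the conclusion is unchanged.
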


The algorithm in~\Cref{lem:base-case} processes the chunks in \textbf{batches}. Each batch $B$ processes $t$ chunks of updates to $G$ for some $t$ to be fixed later ($t$ is going to be $\approx \sqrt{(m/n \cdot \ORS{n}{\Theta(n)})}$). Whenever a batch starts, the algorithm maintains threes graphs (see also~\Cref{fig:schematic1} for an illustration): 
\begin{itemize}
\item $\Gold$ which starts as the graph $G$ at the beginning of the batch; no further insertions will be added to $\Gold$ during the processing of this batch and if an edge already in $\Gold$ is deleted in an update, the algorithm removes the edge from $\Gold$ (i.e., $\Gold$ 
is a decremental graph); 
\item $\Gbatch$ which starts as an empty graph and receives all subsequent insertion of edges to $G$ during the updates of this batch and will be updated based on their deletions also;
\item $\Gmatch$ which starts as an empty graph and is updated by the algorithm by moving certain matchings from $\Gold$ to $\Gmatch$ instead. Once an edge is moved to $\Gmatch$, if it gets deleted, 
the algorithm deletes the edge from $\Gmatch$ (and subsequent insertions are processed in $\Gbatch$; in other words, insertions to $\Gmatch$ only come from moving edges from $\Gold$ to $\Gmatch$). 
\end{itemize}

Given a query $U \subseteq V$, the algorithm starts by examining
 the edges of $\Gbatch$ to see if it can already find a large matching in $\Gbatch[U]$. This is done by running the greedy matching algorithm (in~\Cref{fact:greedy}) over the edges of $\Gbatch[U]$ to obtain a matching $\Mbatch$. 
 If $\Mbatch$ is large enough, it will be returned as the answer to the query. Otherwise, the algorithm runs the greedy matching algorithm on $\Gmatch[U]$ to obtain a matching $\Mmatch$. Again, if this matching is large enough, 
 it will be returned as the answer to the query. Finally, if neither of these cases happen, then the algorithm runs~\Cref{alg:subtime} on $\Gold$ with the subgraph $U$ to obtain a matching $\Mold$; 
 we can guarantee this matching is large enough given $G[U]$ is promised to have a large matching. The algorithm then moves all edges of $\Mold$ from $\Gold$ to $\Gmatch$ and returns $\Mold$ as the answer to the query.
 
 A formal specification of the algorithm is as follows. 
 
 \begin{Algorithm}[Algorithm of~\Cref{lem:base-case}]\label{alg:base-case}
 	~
	\vspace{-5pt}
	\begin{itemize}
		\item Process the updates in batches $B$ of $t$ chunks $C_1,\ldots,C_t$ and for each batch: 
		\begin{enumerate}
			\item Let $\Gold = G$, $\Gbatch = \emptyset$, and $\Gmatch = \emptyset$ (on vertices $V$). Maintain these graphs as follows for updates in each chunk (these graphs may also be updated based on queries): 
			\begin{enumerate}
			\item For an edge insertion $e=(u,v)$, add the edge to $\Gbatch$. 
			\item For an edge deletion $e=(u,v)$, remove the edge $e$ from each of the graphs $\Gold$, $\Gbatch$, or $\Gmatch$ that it belongs to currently. 
			\end{enumerate}
			\item After a chunk is updated, answer each query $U \subseteq V$ as follows: 
			\begin{enumerate}
				\item\label{line:base-batch} Go over all edges of $\Gbatch$ and run the greedy matching algorithm on $\Gbatch[U]$ to obtain a matching $\Mbatch$. If $\card{\Mbatch} \geq \gamma \cdot \delta  n$, return $\Mbatch$, otherwise continue. 
				\item\label{line:base-match} Go over all edges of $\Gmatch$ and run the greedy matching algorithm on $\Gmatch[U]$ to obtain a matching $\Mmatch$. If $\card{\Mmatch} \geq \gamma \cdot \delta  n$, return $\Mmatch$, otherwise continue. 
				\item\label{line:base-old} Run~\Cref{alg:subtime} on $\Gold$, the set $U$, and parameter $2\gamma$ with the guarantee that $\mu(\Gold[U]) \geq \delta n/2$ (which we establish in~\Cref{clm:base-case-correct}) 
				to obtain a matching $\Mold$. Move $\Mold$ from $\Gold$ to $\Gmatch$ and return $\Mold$. 
			\end{enumerate}
		\end{enumerate}
	\end{itemize}
 \end{Algorithm}
 
 \begin{figure}[t!]
 \centering
 \includegraphics[scale=0.4]{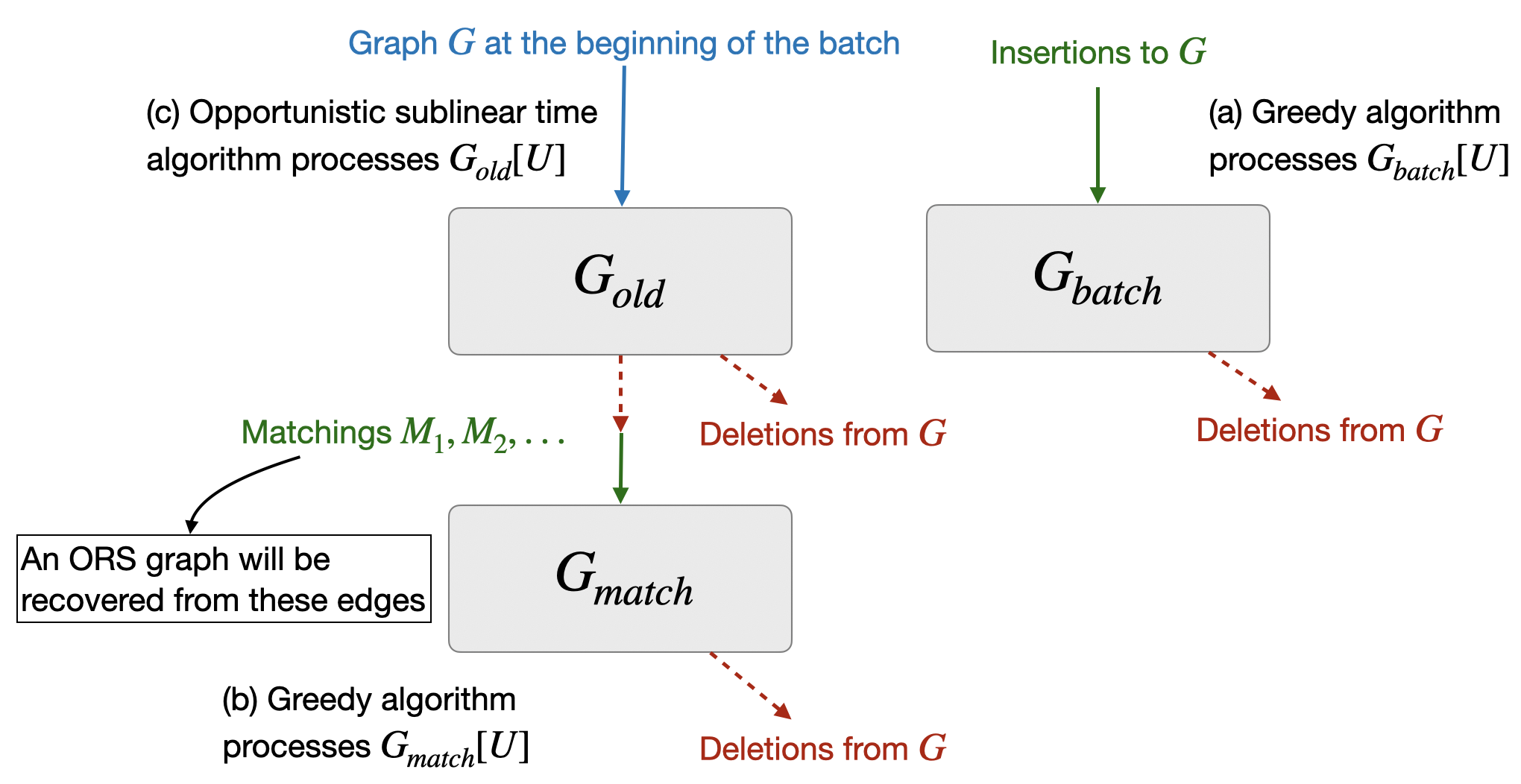}
 \caption{An illustration of the three graphs $\Gold,\Gbatch,\Gmatch$ in~\Cref{alg:base-case}, their role, and how they are being processed.
 Notice that $\Gold$ is a decremental graph, while $\Gbatch,\Gmatch$ are fully dynamic. The analysis of the algorithm forms an ORS from
 the edges of the matchings $M_1,M_2,\ldots,$ moved from $\Gold$ to $\Gmatch$ -- this ORS is a \emph{subgraph} of the \emph{static} graph $G$ at the beginning of the batch
 and does not contain any edges inserted in this batch.
 }\label{fig:schematic1}
 \end{figure}
 
 We start by establishing the correctness of~\Cref{alg:base-case}. 
 
 \begin{claim}\label{clm:base-case-correct}
 	With high probability, the answer to each query $U$ in \Cref{alg:base-case} is a valid answer according to~\Cref{prob:key}. 
 \end{claim}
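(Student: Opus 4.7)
The plan is to verify validity separately along each of the three possible return paths. The key invariant behind the argument is that throughout a batch the edge sets of $\Gold$, $\Gbatch$, and $\Gmatch$ form a partition of the edges of the current graph $G$: we initialize with $\Gold = G$ and $\Gbatch = \Gmatch = \emptyset$; insertions are routed exclusively to $\Gbatch$; each deletion removes the edge from whichever of the three graphs currently holds it; and the only other movement is~\Cref{line:base-old} transferring an entire matching from $\Gold$ into $\Gmatch$. In particular, every edge of $\Gbatch$ or $\Gmatch$ is an edge of the current $G$, so a matching in $\Gbatch[U]$ or $\Gmatch[U]$ is automatically a matching in $G[U]$; together with the explicit size checks in~\Cref{line:base-batch} and~\Cref{line:base-match}, this immediately certifies those two paths.

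The meat is~\Cref{line:base-old}, where I must establish the precondition $\mu(\Gold[U]) \geq \delta n/2$ that~\Cref{thm:subtime} requires. Since this line is reached only when $\card{\Mbatch}, \card{\Mmatch} < \gamma \cdot \delta n$,~\Cref{fact:greedy} gives $\mu(\Gbatch[U]), \mu(\Gmatch[U]) < 2\gamma \cdot \delta n$. Combining the edge-partition invariant with the elementary subadditivity of maximum matching under edge partitions,
\[
	\delta n \;\leq\; \mu(G[U]) \;\leq\; \mu(\Gold[U]) + \mu(\Gbatch[U]) + \mu(\Gmatch[U]) \;<\; \mu(\Gold[U]) + 4\gamma \cdot \delta n,
\]
and the hypothesis $\gamma < 1/12$ yields $\mu(\Gold[U]) > (1 - 4\gamma)\cdot\delta n > \delta n/2$, as required.

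With the precondition in hand, applying~\Cref{thm:subtime} to $\Gold$, the set $U$, and approximation parameter $2\gamma$ produces, with high probability, a matching $\Mold \subseteq \Gold[U] \subseteq G[U]$ of size at least $2\gamma \cdot (\delta/2) \cdot n = \gamma \cdot \delta n$, which is exactly a valid response. A final union bound over the (polynomially many) queries that arise during the algorithm's run---permissible since~\Cref{prob:key} caps the total number of updates, and hence queries, at $\poly(n)$---upgrades the per-query high-probability guarantee to a single high-probability statement covering the whole execution.

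The only genuine step is the lower bound argument on $\mu(\Gold[U])$; the rest is bookkeeping about how insertions, deletions, and matching moves interact with the three auxiliary graphs, and I anticipate no real obstacle.
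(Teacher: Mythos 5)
Your proposal is correct and tracks the paper's own proof essentially line for line: both hinge on the observation that $\Gold$, $\Gbatch$, $\Gmatch$ partition the edges of $G$ at all times, the greedy $1/2$-approximation guarantee to convert the size bounds $\card{\Mbatch},\card{\Mmatch}<\gamma\delta n$ into $\mu(\Gbatch[U]),\mu(\Gmatch[U])<2\gamma\delta n$, subadditivity of $\mu$ under edge partition to get $\mu(\Gold[U])\geq(1-4\gamma)\delta n>\delta n/2$ via $\gamma<1/12$, and then \Cref{thm:subtime} with parameter $2\gamma$ to output a matching of size $\gamma\delta n$. The only cosmetic difference is that you fold the union bound over queries into this claim, whereas the paper defers it to the proof of \Cref{lem:base-case}.
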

 \begin{proof}
 	Notice that $\Gold,\Gbatch,\Gmatch$ at any point partition the current graph $G$. If either of $\Mbatch$ or $\Mmatch$ is of size at least $\gamma \cdot \delta n$, the output is correct. 
	Otherwise, by the guarantee of the greedy algorithm, we know that both $\mu(\Gbatch[U]), \mu(\Gmatch[U])$ are at most $2\gamma \cdot \delta n$. Thus,  
	\[
		\mu(\Gold[U]) \geq \mu(G[U]) - \mu(\Gbatch[U]) - \mu(\Gmatch[U]) \geq \delta n - 4 \gamma \cdot \delta n \geq \delta n/2,
	\]
	by the choice of $\gamma < 1/12$. This implies that the requirement of~\Cref{thm:subtime} for $\Gold[U]$ is satisfied (including having $2\gamma < 1/6$) and thus, its output, with high probability, is of size $2 \gamma \cdot \delta n/2 = \gamma \cdot \delta n$. 
	Thus, the returned matching in this case is also of the proper size, concluding the proof. 
 \end{proof}
 
The main part is to analyze the running time of~\Cref{alg:base-case}. The following claim is the key to relating the runtime of this algorithm to the density of ORS graphs. 
%The key observation is that all the matchings chosen from $\Gold$ actually 

\begin{claim}\label{clm:base-case-ORS}
	Let $M_1,M_2,\ldots,M_{\rho}$ be the matchings computed from $\Gold$ in Line~\eqref{line:base-old} of~\Cref{alg:base-case} and added to $\Gmatch$ \underline{at the time of their computation} (i.e., here, we ignore the deletions that have happened subsequently, namely, some edges of $M_i$ might have been deleted from $\Gold$ when we are inserting $M_{i+1}$, but we still keep those edges in the definition of $M_i$).  
	These matchings are edge-disjoint and for every $i \in [\rho]$, maximum degree of $M_i$ among the matchings $M_{i},\ldots,M_{\rho}$ is at most $\indeg{M_i}$ in the graph $\Gold$ at the time $M_i$ was computed. 
\end{claim}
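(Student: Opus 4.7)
The plan is to exploit a single structural observation about $\Gold$: it is a purely decremental graph during a batch. No insertions ever happen to $\Gold$, and two types of events can only remove edges from it, namely (i) deletions from the input stream that happen to hit an edge currently in $\Gold$, and (ii) the explicit removal of a matching $M_i$ when it is moved to $\Gmatch$ on Line~\eqref{line:base-old}. Let me write $\Gold^{(i)}$ for the state of $\Gold$ at the moment the algorithm computes $M_i$ in its $i$-th invocation of~\Cref{alg:subtime} during the batch.

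For edge-disjointness, the key point is that as soon as $M_i$ is produced it is \emph{removed} from $\Gold$. Thus for any $j > i$ we have $\Gold^{(j)} \subseteq \Gold^{(i)} \setminus M_i$, so the edges of $M_i$ are not present in $\Gold^{(j)}$ and cannot be chosen as edges of $M_j$. Iterating this across all pairs $i < j$ establishes that the $\rho$ matchings are pairwise edge-disjoint.

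For the degree bound, I would first observe that for every $j \geq i$, the matching $M_j$ is a subgraph of $\Gold^{(j)}$ by construction, and monotone decreasingness of $\Gold$ gives $\Gold^{(j)} \subseteq \Gold^{(i)}$. Therefore
\[
 M_i \cup M_{i+1} \cup \cdots \cup M_{\rho} \;\subseteq\; \Gold^{(i)}.
\]
Restricting to the vertex set $V(M_i)$ preserves the containment, so the induced subgraph of $M_i \cup \cdots \cup M_\rho$ on $V(M_i)$ is a subgraph of $\Gold^{(i)}[V(M_i)]$. Hence every vertex of $V(M_i)$ has degree in $M_i \cup \cdots \cup M_\rho$ (restricted to $V(M_i)$) at most its degree in $\Gold^{(i)}[V(M_i)]$, which by definition is at most $\indeg{M_i}$ measured in $\Gold$ at the time $M_i$ was computed.

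There is no serious obstacle here; the statement is essentially a bookkeeping consequence of the fact that $\Gold$ is decremental across a batch and that each $M_i$ is extracted from (and then excised from) $\Gold$. The only care needed is to keep the two roles of ``time'' straight, namely the time $M_i$ was computed (which determines $\indeg{M_i}$ and the snapshot $\Gold^{(i)}$) versus any later point in the batch at which other matchings may have been removed from $\Gold$ or stream deletions may have shrunk it further. Both are handled by the same monotonicity observation $\Gold^{(j)} \subseteq \Gold^{(i)}$ for $j \geq i$.
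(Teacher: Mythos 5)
Your proof is correct and follows essentially the same argument as the paper: edge-disjointness comes from $M_i$ being excised from $\Gold$ at computation time and $\Gold$ being decremental, and the degree bound comes from $M_j \subseteq \Gold^{(j)} \subseteq \Gold^{(i)}$ for $j \ge i$, so that $(M_i \cup \cdots \cup M_\rho)[V(M_i)]$ is a subgraph of $\Gold^{(i)}[V(M_i)]$. Your explicit snapshot notation $\Gold^{(i)}$ makes the monotonicity step slightly more transparent, but the content is the same.
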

\begin{proof}
	The edge-disjoint part follows from the fact that each $M_i$ is chosen from $\Gold$ and at that point, edges of $M_1,\ldots,M_{i-1}$ are already removed from $\Gold$ (and cannot be inserted back to $\Gold$). 
	
	Fix any $i \in [\rho]$ and matching $M_i$. Since $\Gold$ is a decremental graph, all edges in $M_{i+1},\ldots,M_{\rho}$ belong to $\Gold$ at the time of computation of $M_i$. Thus, these edges will be counted toward 
	$\indeg{M_i}$ in $\Gold$ at the time of computation of $M_i$. As such, the maximum degree of $M_i$ among the matchings $M_{i},\ldots,M_{\rho}$ is at most $\indeg{M_i}$ as desired. 
\end{proof}

We can now bound the runtime of the algorithm. 

\begin{claim}\label{clm:base-case-runtime}
	With high probability, when running~\Cref{alg:base-case} on a single batch of $t$ chunks: 
	\begin{enumerate}
		\item the total time spent for maintaining the graphs and bookkeeping is $O(t \cdot q \cdot \alpha \cdot n)$ time; 
		\item the total time spent computing $\Mbatch$ in Line~\eqref{line:base-batch} is $O(t \cdot q \cdot t \cdot \alpha \cdot n)$ time; 
		\item the total time spent computing $\Mmatch$ in Line~\eqref{line:base-match} is $O(t \cdot q \cdot t \cdot q \cdot \alpha \cdot n)$ time; 
		\item the total time spent computing $\Mold$ in Line~\eqref{line:base-old} is $O(m \cdot n^{6\gamma} \cdot \log^2\!{(n)} \cdot \ORS{n}{\gamma \cdot \delta n/2})$ time. 
	\end{enumerate}
\end{claim}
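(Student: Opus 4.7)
The plan is to handle the four bounds separately. Items (1)--(3) follow by direct accounting of the sizes of the graphs $\Gold$, $\Gbatch$, $\Gmatch$ and the number of times each operation is invoked, while item (4) is the main step and requires combining the runtime guarantee of~\Cref{thm:subtime} with the ORS structural bound from~\Cref{lem:ORS}, using~\Cref{clm:base-case-ORS} as the bridge.

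For (1), I would observe that each of the $t$ chunks contains $\alpha n$ updates, and by the data structure from~\Cref{sec:notation} each insertion/deletion takes $O(1)$ expected time across all three maintained graphs. In addition, each of the at most $tq$ queries may trigger one move of a matching of size at most $\gamma\delta n\leq \alpha n$ from $\Gold$ into $\Gmatch$ in Line~\eqref{line:base-old}. Summing the two contributions gives $O(t \alpha n + tq \cdot \alpha n) = O(tq \alpha n)$. For (2), the graph $\Gbatch$ only accumulates from edge insertions received in the batch, so $|E(\Gbatch)| \le t\alpha n$ at all times; running the greedy algorithm of~\Cref{fact:greedy} on $\Gbatch[U]$ thus takes $O(t\alpha n)$ time per query, and there are at most $tq$ queries. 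For (3), the graph $\Gmatch$ only grows via the at most $tq$ matching-moves from $\Gold$, each of size at most $\gamma\delta n\leq \alpha n$, so $|E(\Gmatch)|\leq tq\alpha n$ at all times, and greedy on $\Gmatch[U]$ costs $O(tq\alpha n)$ per query across $tq$ queries.

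For (4), the main step, I would apply~\Cref{thm:subtime} to $\Gold$ with parameter $2\gamma$ (valid since $2\gamma<1/6$) and lower bound $\mu(\Gold[U])\geq \delta n/2$ which was established in the proof of~\Cref{clm:base-case-correct}. Each such call returns a matching $M_i$ of size at least $\gamma\delta n$ in time $O(m\cdot n^{6\gamma}\cdot \log(n)/\indeg{M_i})$ with high probability. Summing over the $\rho$ calls issued within the batch, the total time is at most
\[
O\!\Paren{m \cdot n^{6\gamma} \cdot \log(n) \cdot \sum_{i=1}^{\rho} \frac{1}{\indeg{M_i}}}.
\]
By~\Cref{clm:base-case-ORS}, the matchings $M_1,\dots,M_\rho$ are edge-disjoint and each $M_i$ has maximum degree at most $\indeg{M_i}$ in the subgraph formed by $M_i \cup M_{i+1} \cup \cdots \cup M_\rho$; this is where the decremental nature of $\Gold$ within a batch is crucial. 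This is exactly the hypothesis of~\Cref{lem:ORS} with $\ell=\gamma\delta n$ and $d_i=\indeg{M_i}$. Choosing $\eta=1/2$ in that lemma yields $\sum_i 1/\indeg{M_i} = O(\log(n)\cdot \ORS{n}{\gamma\delta n/2})$, and substituting gives the claimed bound $O(m\cdot n^{6\gamma} \cdot \log^2(n)\cdot \ORS{n}{\gamma\delta n/2})$.

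The main conceptual obstacle is ensuring that~\Cref{lem:ORS} can be applied cleanly when the queries $U$, and hence the matchings $M_i$, are produced adaptively against an adversary. This is handled by the fact that~\Cref{lem:ORS} is a purely combinatorial statement about any ordered sequence of matchings in a fixed host graph: within one batch, every $M_i$ is a subgraph of the snapshot of $\Gold$ at the start of the batch (since $\Gold$ is decremental), so the entire ordered sequence lives inside that single static graph and the ORS bound applies regardless of the adaptive choices of $U$. A minor care is needed in defining $M_i$ as the matching at the time of its computation (with later-deleted edges still counted), matching the definition in~\Cref{clm:base-case-ORS}; the sublinear algorithm's runtime bound is stated in those terms as well, so everything lines up.
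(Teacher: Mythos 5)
Your proposal is correct and follows essentially the same line of reasoning as the paper's proof: items (1)--(3) by direct size-and-count accounting, and item (4) by combining~\Cref{thm:subtime}, \Cref{clm:base-case-ORS}, and~\Cref{lem:ORS} with $\eta=1/2$. The only cosmetic difference is that the paper introduces an auxiliary $d_i$ (the exact suffix degree) and uses $\indeg{M_i}\geq d_i$ to pass from $\sum 1/\indeg{M_i}$ to $\sum 1/d_i$, whereas you feed $\indeg{M_i}$ directly into~\Cref{lem:ORS} as an upper bound on the suffix degree; both are valid since the lemma's proof only uses the upper-bound direction.
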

\begin{proof}
	There are $t$ chunks in a batch, each involving $\alpha  n$ updates and $q$ queries. Processing the updates can be done in $O(1)$ expected time for each update by maintaining the three graphs $\Gold, \Gbatch, \Gmatch$ 
	as explained in~\Cref{sec:notation}. 
	Thus, with high probability, these steps take $O(t \cdot q \cdot \alpha \cdot n)$ time in total. Finally, given $\alpha \geq \gamma \cdot \delta$ in~\Cref{lem:base-case}, 
	moving each choice of $\Mold$ from $\Gold$ to $\Gmatch$ takes $O(\alpha \cdot n)$ time per each query (at most), and thus $O(t \cdot q \cdot \alpha \cdot n)$ in total.
	
	For computing $\Mbatch$ for each query, the algorithm iterates over edges of $\Gbatch$ and takes linear time in the size of the entire $\Gbatch$ to run the greedy matching algorithm (on $\Gbatch[U]$). 
	Given that $\Gbatch$ can only have $\leq t \cdot \alpha \cdot n$ edges at any point (before a new batch is restarted), 
	the runtime for each query is $O(t \cdot \alpha \cdot n)$ time. Given there are $t \cdot q$ queries in total, this part takes  $O(t \cdot q \cdot t \cdot \alpha \cdot n)$ time. 
	
	Similarly, computing $\Mmatch$ for each query is done by iterating over all edges of $\Gmatch$ and taking linear time on those. The edges in $\Gmatch$ come from inserting a matching of size $\gamma \cdot \delta n \leq \alpha n$ (by the assumption
	in the statement of~\Cref{lem:base-case}) after a query (possibly) and since there are most $t \cdot q$ queries, 
	there can be at most $O(t \cdot q \cdot \alpha \cdot n)$ edges in $\Gmatch$. Thus, similar (but not identical) to the previous case, this step takes $O(t \cdot q \cdot t \cdot q \cdot \alpha \cdot n)$ time (this is a factor $q$ larger). 
	
	We now get to the main part of bounding the runtime of computing $\Mold$. Let $M_1,M_2,\ldots,M_{\rho}$ be the matchings computed as $\Mold$ throughout this entire batch. For each $i \in [\rho]$, let $\indeg{M_i}$
	denote the maximum degree of $M_i$ in $\Gold$ at the time $M_i$ was computed; additionally, let $d_i$ denote the maximum degree of $M_i$ among the matchings $M_i, \ldots, M_{\rho}$. 
	By~\Cref{clm:base-case-ORS}, we have $\indeg{M_i} \geq d_i$. Moreover, by~\Cref{thm:subtime}, the runtime for computing $M_i$ in~\Cref{alg:subtime} with high probability is $O(m \cdot n^{6\gamma} \cdot \log{(n)}/\indeg{M_i})$. 
	Putting these together with~\Cref{clm:base-case-ORS} in~\Cref{lem:ORS} (for parameter $\eta=1/2$, and since size of each matching is $\gamma \cdot \delta n$ by~\Cref{clm:base-case-ORS}), we have that the total time spent computing $M_1,\ldots,M_{\rho}$ is
	\begin{align*}
		\sum_{i=1}^{\rho} O(m \cdot n^{6\gamma} \cdot \log{(n)} \cdot \frac{1}{\indeg{M_i}}) &\leq O(m \cdot n^{6\gamma} \cdot \log\!{(n)}) \cdot \sum_{i=1}^{\rho} \frac{1}{d_i} \\
		&\leq O(m \cdot n^{6\gamma} \cdot \log^2\!{(n)} \cdot \ORS{n}{\gamma \cdot \delta n/2}). 
	\end{align*}
	This concludes the proof. 
\end{proof}

%We can now conclude the proof of~\Cref{lem:base-case}. 

\begin{proof}[Proof of~\Cref{lem:base-case}]
	The correctness of the algorithm follows from~\Cref{clm:base-case-correct} and a union bound over $\poly(n)$ intermediate graphs created in~\Cref{prob:key} (by the assumption on number of updates).  
	
	Furthermore, the amortized runtime per each of $t \cdot \alpha \cdot n$ updates during a batch, by~\Cref{clm:base-case-runtime} is 
	\[
		O(t \cdot q^2) + O(\frac{1}{t \cdot \alpha \cdot n} \cdot m \cdot n^{6\gamma} \cdot \log^2\!{(n)} \cdot \ORS{n}{\gamma \cdot \delta n/2}). 
	\]
	We can now balance these terms by setting 
	\[
		t := \paren{\frac{m \cdot n^{6\gamma} \cdot \ORS{n}{\gamma \cdot \delta n/2}}{\alpha \cdot n \cdot q^2}}^{1/2},
	\]
	which leads to the desired update time of 
	\[
		O(q) \cdot \paren{\frac{m \cdot n^{6\gamma} \cdot \ORS{n}{\gamma \cdot \delta n/2}}{\alpha \cdot n}}^{1/2}.
	\]
	Note however that it is possible the \emph{entire} number of updates to~\Cref{prob:key} is less than $t$, namely, the algorithm does not receive even one full batch of updates. 
	In that case, we cannot amortize the runtime as above given the fewer number of updates.
	
	Nevertheless, since by the definition of~\Cref{prob:key}, the graph starts as an empty graph, in this case both $\Gold = \emptyset$ and $\Gmatch = \emptyset$ for the single batch processed by the algorithm. Thus, the entire runtime of the algorithm will be based on the first two items of~\Cref{clm:base-case-runtime} and thus is still upper bounded as above. 
%	concluding the proof. 
\end{proof}

\subsection{The Recursive Step} \label{sec:recursive} 

We now design a family of recursive algorithms $\set{\alg}_{k=1}^{\infty}$ with progressively better update times for~\Cref{prob:key} using~\Cref{alg:base-case} as the base case of this family (i.e., $\alg_1$). 
Roughly speaking, the algorithm $\alg_k$ in this family achieves an update time $\approx \paren{m/n}^{1/(k+1)} \cdot \ORS{n}{\Theta_k(n)}^{1-1/(k+1)}$. 

\begin{lemma}\label{lem:recursive}
	There exists an absolute constant $c \geq 1$ such that the following holds. For any $k \geq 1$, there is an algorithm $\alg_k(n,m,q,\gamma,\delta,\alpha)$ (\Cref{alg:recursive}) for~\Cref{prob:key} that with high probability takes	
	\[
		O\Paren{(2q)^{k-1} \cdot \paren{\frac{m}{n}}^{1/k+1} \cdot  \ORS{n}{\gamma \cdot \delta n/2}^{1-1/(k+1)} \cdot n^{6\gamma} \cdot (\log{(n)}/\delta)^{c}},
	\]
	 amortized time over the updates to maintain the answer to all given queries.  The algorithm works as long as $\gamma < (1/12)^{k}$ and $\alpha \geq \gamma \cdot \delta$. 
\end{lemma}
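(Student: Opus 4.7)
The proof proceeds by induction on $k$, with the base case $k=1$ handled by~\Cref{lem:base-case}. For the inductive step, I would construct $\alg_k$ from~\Cref{alg:base-case} by replacing the greedy matching calls on $\Gbatch[U]$ and $\Gmatch[U]$ in \Cref{line:base-batch,line:base-match} with recursive invocations of $\alg_{k-1}$, while leaving the~\Cref{alg:subtime} call on $\Gold[U]$ in~\Cref{line:base-old} and the bookkeeping of $\Gold,\Gbatch,\Gmatch$ unchanged. Concretely, $\alg_k$ processes updates in batches of $t_k$ chunks (with $t_k$ to be optimized at the end), and at each batch boundary it re-initializes $\Gold, \Gbatch, \Gmatch$ exactly as in~\Cref{alg:base-case} together with two fresh instances of $\alg_{k-1}$---one tracking $\Gbatch$ and the other tracking $\Gmatch$---both invoked with the parameter substitution $(\gamma',\delta') := (4\gamma,\delta/4)$.

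Correctness mirrors~\Cref{clm:base-case-correct}: since $\gamma'\delta' = \gamma\delta$, whenever a recursive instance returns a matching of size at least $\gamma\delta n$ we can return it directly. If both recursive calls return less than $\gamma\delta n$, then by the high-probability inductive guarantee applied in contrapositive form we must have $\mu(\Gbatch[U]), \mu(\Gmatch[U]) < \delta' n = \delta n /4$ (otherwise $\alg_{k-1}$ would have returned at least $\gamma'\delta' n = \gamma\delta n$). Hence $\mu(\Gold[U]) \geq \delta n - 2\delta n/4 = \delta n/2$, so~\Cref{alg:subtime} invoked on $\Gold[U]$ with parameter $2\gamma$ succeeds exactly as in the base case. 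The hypothesis $\gamma < (1/12)^k$ then yields $\gamma' = 4\gamma < (1/12)^{k-1}$, which validates the inductive precondition, and $\alpha \geq \gamma'\delta' = \gamma\delta$ is inherited.

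For the runtime, I would split the amortized cost per outer update into three contributions. The $\Gold$-contribution is identical to~\Cref{clm:base-case-runtime}(4), namely $O\paren{m \cdot n^{6\gamma} \cdot \log^2{n} \cdot \ORS{n}{\gamma\delta n/2} / (t_k \alpha n)}$, via the same ORS accounting of~\Cref{clm:base-case-ORS,lem:ORS} applied to the decremental graph $\Gold$. The $\Gbatch$-instance of $\alg_{k-1}$ receives the outer updates in lockstep with outer chunks at $q' = q$ and $\alpha' = \alpha$, with $m' \leq t_k \alpha n$, and hence contributes $T_{k-1}(n, t_k\alpha n, q, 4\gamma, \delta/4, \alpha)$ per outer update. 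The $\Gmatch$-instance is more delicate, because its ``updates''---the matching insertions performed during query answering---naturally interleave with the queries rather than all preceding them. I would repackage these updates by grouping each prospective matching insertion (of at most $\gamma\delta n$ edges) with a $1/q$-fraction of the surrounding outer chunk's deletions and padding to size $\alpha'' n = O(\alpha n)$ using the empty-update mechanism built into~\Cref{prob:key}; the resulting sub-chunk has $q'' = 1$ query and total edge budget $m'' \leq O(t_k q \alpha n)$. This yields an additional $O(q) \cdot T_{k-1}(n, O(t_k q \alpha n), 1, 4\gamma, \delta/4, O(\alpha))$ per outer update. Substituting the inductive form of $T_{k-1}$, optimizing $t_k$ to balance the $\Gold$ term against the two recursive terms, and taking the constant $c$ in $(\log n/\delta)^c$ large enough to absorb both the iterated log losses and the factor-of-$4$ inflation of $1/\delta$ across $k$ levels, produces the claimed bound.

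The main obstacle is exactly this interleaving of $\Gmatch$'s updates with query answering, which is incompatible with the chunk-then-query interface of~\Cref{prob:key}; the empty-update mechanism is precisely the device that allows the $\Gmatch$-instance of $\alg_{k-1}$ to be invoked as a black box on an appropriate chunking, at the cost of an extra factor $q$ in the number of chunks seen by that instance---this is ultimately responsible for the $(2q)^{k-1}$ growth in the final bound. A secondary technical point is to verify that the parameter transformation $(\gamma,\delta)\mapsto(4\gamma,\delta/4)$ propagates cleanly through $k$ levels exactly when $\gamma < (1/12)^k$, which is precisely why this threshold appears in the statement.
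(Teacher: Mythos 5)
Your overall architecture matches the paper's: induction on $k$, replacing the greedy calls on $\Gbatch$ and $\Gmatch$ with recursive calls to the lower-level algorithm, keeping the $\Gold$ handling and the ORS accounting unchanged. However, there is a genuine gap in the correctness argument that the paper resolves with an extra ingredient you did not use.

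Your correctness step reasons contrapositively: ``if both recursive calls return matchings smaller than $\gamma\delta n$, then $\mu(\Gbatch[U])$ and $\mu(\Gmatch[U])$ must both be small, since otherwise $\alg_{k-1}$ would have returned a large matching.'' This inference is unsound because~\Cref{prob:key} is a \emph{promise} problem: the guarantee that $\alg_{k-1}$ returns a matching of size $\geq\gamma'\delta' n$ holds only for queries $U$ that already satisfy $\mu(\cdot[U])\geq\delta' n$. If you pass a query for which the promise fails, the specification of $\alg_{k-1}$ (and hence the inductive hypothesis) says nothing about what it returns. Worse, the recursive instance is a stateful data structure shared across all future queries in the batch: once it is queried outside its contract, neither its future correctness nor its runtime accounting can be invoked. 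Concretely, the inner call to~\Cref{alg:subtime} inside $\alg_{k-1}$ on $\Gold$ may now fail to find a matching of size $\gamma\delta n$ (Claim~\ref{clm:subtime-terminate} is inapplicable), and the ORS-based bound of~\Cref{lem:ORS} then breaks because the matchings fed into it no longer all have size $\Omega(n)$.

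The paper avoids this by inserting an explicit \emph{promise check} before each recursive call: it runs the sublinear matching-size estimator of~\Cref{prop:subtime-size} on $\Gbatch[U]$ (respectively $\Gmatch[U]$) and forwards the query to $\alg_k$ only if the estimate is $\geq\delta_k n$. When the estimate is below $\delta_k n$, the two-sided guarantee of the estimator lets one conclude $\mu(\Gbatch[U])\leq 3\delta_k n$ directly (no appeal to the recursive algorithm's output needed), which is what feeds the bound $\mu(\Gold[U])\geq\delta n - 6\delta_k n\geq\delta n/2$ in Claim~\ref{clm:recursive-correct}. The factor-of-$3$ slack in the estimator is also why the paper rescales $(\gamma,\delta)\mapsto(12\gamma,\delta/12)$ rather than your $(4\gamma,\delta/4)$, and why the hypothesis $\gamma<(1/12)^k$ appears. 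The additional $O(n\cdot\poly(\log n/\delta))$ cost of the estimator per query is what the $(\log n/\delta)^c$ factor must absorb. Your secondary repackaging of the $\Gmatch$ updates into chunks with $q''=1$ differs from the paper (which simply takes $q_k=q$ and allows up to $t(q+1)$ chunks), but this is a cosmetic accounting difference; the missing promise check is the essential gap.
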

 
We prove~\Cref{lem:recursive} by induction. The base case is handled by~\Cref{lem:base-case} for $\alg_1$. Now, suppose the lemma is true for some $k \geq 1$ and we 
prove it for $k+1$. The algorithm $\alg_{k+1}$ follows the same approach of~\Cref{alg:base-case} in processing the graph in batches of $t$ chunks of size $\alpha_{k+1} \cdot n_{k+1}$ for some $t$ to be determined later (it is going to be 
 $\approx \paren{{m}/{n}}^{k+1/(k+2)} \cdot \ORS{n}{\Theta(n)}^{1/(k+2)}$). In each batch, the algorithm also partitions the graph into three subgraphs $\Gold,\Gbatch,\Gmatch$ with
very similar definitions as in the past. %Each batch of updates processed by $\alg_{k+1}$ consists of $t$ chunks of equal size, namely, $\alpha_{k+1} n_{k+1}$.
The main difference however is that both graphs $\Gbatch$ and $\Gmatch$ are now handled by running algorithm $\alg_k$ over them (instead of the greedy approach of~\Cref{alg:base-case}). 

We first specify the parameters used for running $\alg_k$ on $\Gbatch$ and $\Gmatch$: 
\begin{alignat}{3}
	&n_k := n \qquad &&m_k := t \cdot q \cdot \alpha \cdot n \qquad &&q_k : = q \notag \\
	&\gamma_k := 12 \cdot \gamma \qquad &&\delta_k := \delta/12 \qquad &&\alpha_k := \alpha.  \label{eq:recursive-parameters} 
\end{alignat}

We will run $\alg_k(n_k,m_k,q_k,\gamma_k,\delta_k,\alpha_k)$ on $\Gbatch$ and $\Gmatch$. By the assumption in~\Cref{lem:recursive}, we have $\gamma < (1/12)^{k+1}$ and thus $\gamma_k = 12 \cdot \gamma < (1/12)^{k}$; hence we satisfy the condition for invoking the induction hypothesis of $\alg_k$. Similarly, we have $\alpha = \alpha_k$ and $\gamma_k \cdot \delta_k = \gamma \cdot \delta$ and so $\alpha_k \geq \gamma_k \cdot \delta_k$ also holds. 
Finally, at the beginning of each batch, $\Gbatch,\Gmatch$ are both empty graphs and thus satisfy the promise of~\Cref{prob:key}. As such, we can indeed apply the induction hypothesis to $\alg_k$ in the following (the only remaining part we need
to explicitly account for is to make sure that for each query $U$, we are guaranteed that $\mu(\Gbatch[U])$  is at least $\delta_k \cdot n$, before calling $\alg_k$ on $\Gbatch$ to be consistent with the definition of~\Cref{prob:key} (similarly for $\Gmatch$); we use~\Cref{prop:subtime-size} for ensuring this guarantee).  

The following algorithm follows the same strategy of~\Cref{alg:base-case} modulo applying $\alg_k$ to $\Gbatch$ and $\Gmatch$ instead of running the greedy algorithm over them (see also~\Cref{fig:schematic2}).

 \begin{Algorithm}[Algorithm $\alg_{k+1}(n,m,q,\gamma,\delta,\alpha)$ of~\Cref{lem:recursive}]\label{alg:recursive}
 	~
	\vspace{-5pt}
	\begin{itemize}
		\item Process the updates in batches $B$ of $t$ chunks $C_1,\ldots,C_t$ and for each batch: 
		\begin{enumerate}
			\item Let $\Gold = G$, $\Gbatch = \emptyset$, and $\Gmatch = \emptyset$ (on vertices $V$). Start two copies of $\alg_k(n_k,m_k,q_k,\gamma_k,\delta_k,\alpha_k)$ 
			on $\Gbatch$ and $\Gmatch$ separately with the parameters in~\Cref{eq:recursive-parameters}.  

			Maintain these graphs as follows for updates in each chunk (these graphs may also be updated based on queries): 
			\begin{enumerate}
			\item For an edge insertion $e=(u,v)$, add the edge to $\Gbatch$. 
			\item For an edge deletion $e=(u,v)$, remove the edge $e$ from any of the graphs $\Gold$, $\Gbatch$, or $\Gmatch$ that it belongs to currently. 
			\end{enumerate}
			\item After a chunk is updated, answer each query $U \subseteq V$ as follows: 
			\begin{enumerate}
		%		\item\label{line:recursive-size} Run~\Cref{prop:subtime-size} on the graph $G$ with parameter $\eps = (\delta/4)$ to obtain an estimate $\tmu$ of $\mu(G[U])$. If $\tmu < \delta n/4$, return $\mu(G[U]) < \delta n$ as the answer to this query. 
		%		Otherwise, continue. 
				\item\label{line:recursive-batch} Run~\Cref{prop:subtime-size} on $\Gbatch[U]$ with parameter $\eps = (\delta_k/2)$ to obtain an estimate $\tmu:=\tmu(\Gbatch[U])$ of $\mu(\Gbatch[U])$. If $\tmu \geq \delta_k \cdot n$,
				pass the query $U$ to $\alg_k$ on $\Gbatch$ and return its output matching $\Mbatch$ as the answer; otherwise, continue.  
				\item\label{line:recursive-match} Run~\Cref{prop:subtime-size} on $\Gmatch[U]$ with parameter $\eps = (\delta_k/2)$ to obtain an estimate $\tmu:=\tmu(\Gmatch[U])$ of $\mu(\Gmatch[U])$. If $\tmu \geq \delta_k \cdot n$,
				pass the query $U$ to $\alg_k$ on $\Gmatch$ and return its output matching $\Mmatch$ as the answer; otherwise, continue. 
				\item\label{line:recursive-old} Run~\Cref{alg:subtime} on $\Gold$, the set $U$, and parameter $2\gamma$ with the guarantee that $\mu(\Gold[U]) \geq \delta n/2$ (which we establish in~\Cref{clm:recursive-correct}) 
				to obtain a matching $\Mold$. Remove $\Mold$ from $\Gold$ and insert it into $\Gmatch$.
			\end{enumerate}
		\end{enumerate}
	\end{itemize}
 \end{Algorithm}
 
 A remark about the updates in~\Cref{alg:recursive} is in order. Firstly, when processing an arriving chunk, to update $\Gbatch$ or $\Gmatch$, 
 we create two separate chunks of size $\alpha_k \cdot n_k$ based on these updates for $\alg_k$ on $\Gbatch$ and $\Gmatch$ separately (appending them with empty updates if needed, to have length exactly $\alpha_k \cdot n_k)$. 
We then pass these chunks to each algorithm as their updates. The updates to $\Gold$ are done directly. Moreover, in Line~\eqref{line:recursive-old}, we insert $\Mold$ of size $\gamma \cdot \delta  n \leq \alpha_k \cdot n_k$ (by the guarantee of~\Cref{lem:recursive} and choice of parameters in~\Cref{eq:recursive-parameters}) as a single chunk  (possibly with empty updates) to $\alg_k$ running on $\Gmatch$ (there will be no queries after these chunks for $\alg_k$). 

We first ensure that the subroutine calls in~\Cref{alg:recursive} are all valid. 

\begin{claim}\label{clm:recursive-calls-correct}
	With high probability, when running~\Cref{alg:recursive} on a single batch of $t$ chunks: 
	\begin{enumerate}
		\item $\Gbatch$ starts empty, at any point it has at most $m_k$ edges, and in total it receives $t$ chunks of size $\alpha_k \cdot n_k$ each; moreover, each query
		$U$ to $\alg_k$ on $\Gbatch$ satisfies $\mu(\Gbatch[U]) \geq \delta_k \cdot n$; 
		\item $\Gmatch$ starts empty, at any point it has at most $m_k$ edges, and in total it receives at most $t \cdot (q+1)$ chunks of size $\alpha_k \cdot n_k$; 
		moreover, each query $U$ to $\alg_k$ on $\Gmatch$ satisfies $\mu(\Gmatch[U]) \geq \delta_k \cdot n$; 
		\item $\Gold,\Gbatch,\Gmatch$ at any point partition the edges of $G$.
	\end{enumerate}
\end{claim}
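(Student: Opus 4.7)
The plan is to treat the claim as a bookkeeping verification flowing directly from the specification of~\Cref{alg:recursive} and the parameter choices in~\Cref{eq:recursive-parameters}. My plan is to handle the three bullets in turn, tracing how updates to $G$ are re-packaged into chunks for the two copies of $\alg_k$ running on $\Gbatch$ and $\Gmatch$, and invoking~\Cref{prop:subtime-size} for the matching-size guarantees on the subproblems handed to those copies.

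For $\Gbatch$, I would start by noting that it equals $\emptyset$ at the beginning of each batch by construction. Every arriving chunk of $\alpha n$ updates is forwarded as one chunk of size $\alpha_k n_k = \alpha n$ to the copy of $\alg_k$ on $\Gbatch$ (padded with empty updates as needed), so the total number of chunks in one batch is $t$. All insertions into $\Gbatch$ come from insertions to $G$, so at any point $\Gbatch$ contains at most $t \cdot \alpha n \leq t \cdot q \cdot \alpha n = m_k$ edges. For the query condition, the algorithm only invokes $\alg_k$ on $\Gbatch$ when the estimate $\tmu$ of Line~\eqref{line:recursive-batch} satisfies $\tmu \geq \delta_k \cdot n$; by~\Cref{prop:subtime-size}, $\tmu \leq \mu(\Gbatch[U])$ with high probability, which yields $\mu(\Gbatch[U]) \geq \delta_k \cdot n$ as required.

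The $\Gmatch$ bullet is similar but needs a slightly more careful accounting, since chunks arrive from two distinct sources. Each of the $t$ arriving chunks to $G$ produces one chunk of deletions to $\Gmatch$ (again padded to length $\alpha_k n_k$), and each of at most $t \cdot q$ queries that reach Line~\eqref{line:recursive-old} produces one chunk inserting a single matching $\Mold$ of size $\gamma \cdot \delta n \leq \alpha_k n_k$. Summing gives at most $t(q+1)$ chunks in the batch, and the total edges ever inserted into $\Gmatch$ is at most $t \cdot q \cdot \alpha n = m_k$. The query guarantee is then obtained from~\Cref{prop:subtime-size} exactly as for $\Gbatch$, now applied in Line~\eqref{line:recursive-match}.

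For the partition bullet, the claim is immediate at the start of each batch (since $\Gold = G$ and $\Gbatch, \Gmatch$ are empty), and I would verify it inductively under each update type: every insertion to $G$ is routed solely to $\Gbatch$, every deletion removes the edge from whichever single graph currently contains it, and the atomic move of $\Mold$ from $\Gold$ to $\Gmatch$ in Line~\eqref{line:recursive-old} preserves the partition. A union bound over the $\poly(n)$ invocations of~\Cref{prop:subtime-size} across the run of the algorithm then delivers the overall ``with high probability'' qualifier. The only mild obstacle is the bookkeeping around chunk-size conversion, in particular the padding with empty updates to keep the chunks consistent with the format demanded by~\Cref{prob:key}; no new ideas beyond careful tracing of the algorithm are needed.
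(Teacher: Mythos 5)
Your proposal is correct and follows essentially the same approach as the paper: trace the chunk-accounting for $\Gbatch$ and $\Gmatch$ from the definitions in~\Cref{eq:recursive-parameters}, use the one-sided guarantee $\tmu \leq \mu$ from~\Cref{prop:subtime-size} to certify each query passed to $\alg_k$, and verify the partition invariant inductively per update type. Your edge-count argument for $\Gmatch$ is stated a bit more explicitly than the paper's (which only counts chunks), but the reasoning is the same.
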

\begin{proof}
	We prove each part as follows: 
	\begin{enumerate}
		\item At the beginning of the batch, we have $\Gbatch = \emptyset$ and for each chunk as input to~\Cref{alg:recursive}, $\alg_k$ also receives a chunk of size $\alpha_k \cdot n_k = \alpha \cdot n$ as an update (possibly with empty updates; 
		recall that $\Gbatch$ only processes insertions in the batch and deletions of the edges inserted during the batch). 
		Since there are $t$ chunks in each batch of~\Cref{alg:recursive}, there will be at most $t \cdot \alpha \cdot n$ edge insertions to $\Gbatch$ which is equal to $m_k/q \leq m_k$ by~\Cref{eq:recursive-parameters}. 
		Moreover, running~\Cref{prop:subtime-size} with high probability, returns $\tmu(\Gbatch[U]) \leq \mu(\Gbatch[U])$ and thus when~\Cref{alg:recursive} decides to query $\alg_k$ on $\Gbatch$, 
		we have $\mu(\Gbatch[U]) \geq \delta_k \cdot n$. 
		
		\item At the beginning of the batch $\Gbatch = \emptyset$ and for each chunk as input to~\Cref{alg:recursive}, $\alg_k$ also receives a chunk of size $\alpha_k \cdot n_k = \alpha \cdot n$ as an update, which
		can only include deletions and empty updates. Moreover, each time the algorithm reaches Line~\eqref{line:recursive-old}, it will be inserting edges of a matching $\Mold$ as updates in chunks of size $\alpha_k \cdot n_k$. 
		Given that $\Mold$ is of size $\gamma \cdot \delta n \leq \alpha n$ (by the technical assumption on $\gamma,\delta,\alpha$ in~\Cref{prob:key}), this translates into having one more chunk here as well. 
		Thus, for each chunk and each of its $q$ queries, we may insert another chunk of size $\alpha_k \cdot n$ into $\Gmatch$, implying that the total number of inserted chunks is $t \cdot (q+1)$. 
		Moreover, running~\Cref{prop:subtime-size} with high probability, returns $\tmu(\Gmatch[U]) \leq \mu(\Gmatch[U])$ and thus when~\Cref{alg:recursive} decides to query $\alg_k$ on $\Gmatch$, 
		we have $\mu(\Gmatch[U]) \geq \delta_k \cdot n$. 
		
		\item This step simply follows from the above (on validity of updating edges of $\Gbatch$ and $\Gmatch$) and since we only remove an edge from $\Gold$ if it is deleted or if it is moved to $\Gmatch$. \qed		
	\end{enumerate}
\end{proof}

 \begin{figure}[t!]
 \centering
 \includegraphics[scale=0.4]{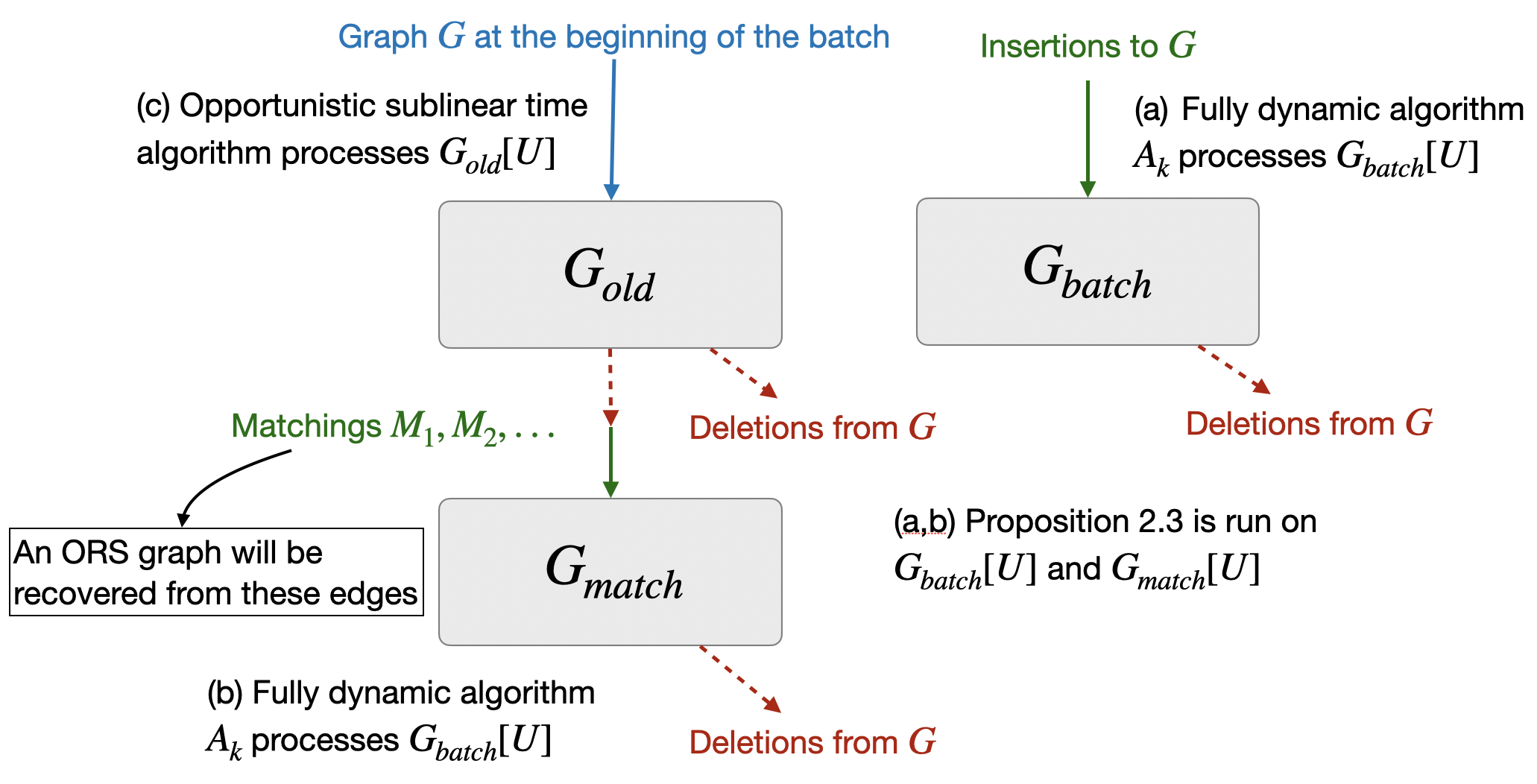}
 \caption{An illustration of the three graphs $\Gold,\Gbatch,\Gmatch$ in~\Cref{alg:recursive} for $\alg_{k+1}$, their role, and how they are being processed.
 Notice that $\Gold$ is a decremental graph, while $\Gbatch,\Gmatch$ are fully dynamic.  The main difference with~\Cref{alg:base-case} is that $\Gbatch$ and $\Gmatch$ are now 
 being handled recursively with $\alg_k$ (steps $(a)$ and $(b)$ also now involve running~\Cref{prop:subtime-size} to check if applying $\alg_k$ is valid). This algorithm also form an ORS from the 
 edges of the matchings $M_1,M_2,\ldots,$ moved from $\Gold$ to $\Gmatch$. }\label{fig:schematic2}
 \end{figure}

We can now establish the correctness of~\Cref{alg:recursive}. 

\begin{claim}\label{clm:recursive-correct}
 	With high probability, the answer to each query $U$ in \Cref{alg:recursive} is a valid answer according to~\Cref{prob:key}. 
 \end{claim}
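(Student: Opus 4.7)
The proof plan is to mirror the structure of Claim~\ref{clm:base-case-correct}, replacing the analysis of the greedy subroutines with an appeal to the inductive hypothesis for $\alg_k$. The algorithm returns a matching in one of three places, so I would case on which line produced the answer.

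\textbf{Easy cases (Lines (a) and (b)).} If the algorithm returns $\Mbatch$ in Line~\eqref{line:recursive-batch}, then by Claim~\ref{clm:recursive-calls-correct}(1) the invocation of $\alg_k$ on $\Gbatch$ respects all the preconditions of Problem~\ref{prob:key} with parameters $(n_k,m_k,q_k,\gamma_k,\delta_k,\alpha_k)$, and in particular $\mu(\Gbatch[U]) \geq \delta_k \cdot n$. Hence by the induction hypothesis of Lemma~\ref{lem:recursive}, $\alg_k$ returns a matching of size at least $\gamma_k \cdot \delta_k \cdot n$ with high probability, and by the choice in~\Cref{eq:recursive-parameters} this equals $12\gamma \cdot (\delta/12) \cdot n = \gamma \cdot \delta \cdot n$, as required. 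The case of $\Mmatch$ in Line~\eqref{line:recursive-match} is analogous via Claim~\ref{clm:recursive-calls-correct}(2).

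\textbf{Main case (Line (c)).} Here I need to verify that the precondition $\mu(\Gold[U]) \geq \delta n/2$ on which the call to \Cref{alg:subtime} relies actually holds. Since the algorithm did not return in Line~\eqref{line:recursive-batch}, the estimate satisfied $\tmu(\Gbatch[U]) < \delta_k \cdot n$. By~\Cref{prop:subtime-size} applied with $\eps = \delta_k/2$, with high probability
\[
    \mu(\Gbatch[U]) \;\leq\; 2\tmu(\Gbatch[U]) + 2\eps n \;<\; 2\delta_k \cdot n + \delta_k \cdot n \;=\; 3\delta_k \cdot n,
\]
and similarly $\mu(\Gmatch[U]) < 3\delta_k \cdot n$. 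By Claim~\ref{clm:recursive-calls-correct}(3), $\Gold, \Gbatch, \Gmatch$ partition the edges of $G$, so
\[
    \mu(\Gold[U]) \;\geq\; \mu(G[U]) - \mu(\Gbatch[U]) - \mu(\Gmatch[U]) \;\geq\; \delta n - 6\delta_k \cdot n \;=\; \delta n/2,
\]
using $\delta_k = \delta/12$. The precondition on $\gamma$ for invoking~\Cref{alg:subtime} with parameter $2\gamma$ is $2\gamma < 1/6$, which follows from the hypothesis $\gamma < (1/12)^{k+1}$ of \Cref{lem:recursive}. Thus~\Cref{alg:subtime} returns with high probability a matching $\Mold$ of size at least $2\gamma \cdot (\delta/2) \cdot n = \gamma \cdot \delta n$, which is then returned (and inserted into $\Gmatch$).

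\textbf{Union bound and main obstacle.} The proof finishes by taking a union bound over all high-probability events: each call to \Cref{prop:subtime-size}, each call to \Cref{alg:subtime}, the recursive guarantee of $\alg_k$, and Claim~\ref{clm:recursive-calls-correct}. Since the total number of queries and updates is $\poly(n)$, the overall failure probability remains $1/\poly(n)$. The only nonroutine step is the translation from the sublinear-time \emph{estimate} $\tmu$ back to a lower bound on $\mu(\Gold[U])$, where one must be careful that the factor-$2$ loss in~\Cref{prop:subtime-size} together with the additive slack $\eps n$ still leaves room to conclude $\mu(\Gold[U]) \geq \delta n/2$; this is exactly why the parameters are balanced so that $\delta_k = \delta/12$ in~\Cref{eq:recursive-parameters}.
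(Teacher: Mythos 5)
Your proposal is correct and follows essentially the same route as the paper's proof: Lines (a) and (b) are handled by \Cref{clm:recursive-calls-correct} together with the induction hypothesis for $\alg_k$, and for Line (c) you use the two-sided guarantee of \Cref{prop:subtime-size} (with $\eps=\delta_k/2$) to turn the failed threshold tests into upper bounds $\mu(\Gbatch[U]),\mu(\Gmatch[U]) \leq 3\delta_k n$, then the edge partition of $G$ into $\Gold,\Gbatch,\Gmatch$ to conclude $\mu(\Gold[U]) \geq \delta n - 6\delta_k n = \delta n/2$, exactly as in the paper. The only cosmetic difference is that you spell out the precondition check $2\gamma < 1/6$ explicitly, which the paper leaves implicit.
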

 \begin{proof}
 	\Cref{clm:recursive-calls-correct} ensures that each query to recursive calls on $\alg_k$ on $\Gbatch$ and $\Gmatch$ returns a valid answer (given all promises required by~\Cref{prob:key} and the induction hypothesis of~\Cref{lem:recursive} for running this algorithm are satisfied). Thus, it remains to consider the case when~\Cref{alg:recursive} reaches Line~\eqref{line:recursive-old} to answer the query $U$. 
 
 	Recall that we have the assumption $\mu(G[U]) \geq \delta n$. At the same time, 
	each call to~\Cref{prop:subtime-size} in Lines~\eqref{line:recursive-batch} and~\eqref{line:recursive-match} guarantees that, respectively, 
	\begin{align*}
		\tmu(\Gbatch[U]) &\geq \frac{1}{2} \cdot \mu(\Gbatch[U]) - \frac{1}{2} \cdot \delta_k \cdot n,  \\
		\tmu(\Gmatch[U]) &\geq \frac{1}{2} \cdot \mu(\Gmatch[U]) - \frac{1}{2} \cdot \delta_k \cdot n.
	\end{align*}
 	Thus, if the algorithm has reached Line~\eqref{line:recursive-old}, we know that 
	\begin{align*}
		\mu(\Gbatch[U]) &\leq 2 \tmu(\Gbatch[U]) + \delta_k \cdot n \leq 3 \cdot \delta_k \cdot n, \\
		\mu(\Gmatch[U]) &\leq 2 \tmu(\Gmatch[U]) +  \delta_k \cdot n \leq 3 \cdot \delta_k \cdot n. 
	\end{align*}
	We further have that $\Gold,\Gbatch,\Gmatch$ at any point partition $G$ by~\Cref{clm:recursive-calls-correct}. Thus, 
	\[
		\mu(\Gold[U]) \geq \mu(G[U]) - \mu(\Gbatch[U]) - \mu(\Gmatch[U]) \geq \delta  n - 6 \cdot \delta_k n \geq \delta  n/2,
	\]
	by the choice of $\delta_k = \delta/12$ in~\Cref{eq:recursive-parameters}. 
	This implies that $\Gold[U]$ satisfies the guarantee of~\Cref{thm:subtime} for~\Cref{alg:subtime} with parameters $(\delta/2)$ and $(2\gamma)$. As such, this algorithm,
	with high probability, returns a matching $\Mold$ of size $\gamma \cdot \delta n$ from $\Gold[U]$, concluding the proof. 
 \end{proof}

The following claim is a direct analogue of~\Cref{clm:base-case-ORS}. Its proof is verbatim as before and hence is omitted here. 

\begin{claim}\label{clm:recursive-ORS}
	Let $M_1,M_2,\ldots,M_{\rho}$ be the matchings computed from $\Gold$ in Line~\eqref{line:recursive-old} of~\Cref{alg:recursive} and added to $\Gmatch$ \underline{at the time of their computation} (i.e., here, we ignore the deletions that have happened 
	subsequently, namely, some edges of $M_i$ might have been deleted from $\Gold$ when we are inserting $M_{i+1}$, but we still keep those edges in the definition of $M_i$).  
	These matchings are edge-disjoint and for every $i \in [\rho]$, maximum degree of $M_i$ among the matchings $M_{i},\ldots,M_{\rho}$ is at most $\indeg{M_i}$ in the graph $\Gold$ at the time $M_i$ was computed. 
\end{claim}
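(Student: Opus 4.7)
The plan is to follow the proof of \Cref{clm:base-case-ORS} essentially verbatim, because that proof uses only two structural properties of \Cref{alg:base-case} which both continue to hold for \Cref{alg:recursive}: namely, that $\Gold$ is strictly decremental throughout a batch, and that each matching $M_j$ is chosen as a subset of the current $\Gold$ in \Cref{line:recursive-old} and then removed from $\Gold$ before the next matching is ever produced. My first step would be to check these two facts against the update rules of \Cref{alg:recursive}. Edges enter $\Gold$ only at the very beginning of the batch, when $\Gold$ is initialized to $G$; the edge-deletion rule can only delete from $\Gold$; and the only other place $\Gold$ is touched is \Cref{line:recursive-old}, which explicitly moves the edges of $\Mold$ out of $\Gold$ into $\Gmatch$. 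Crucially, neither the two recursive instances of $\alg_k$ running on $\Gbatch$ and $\Gmatch$, nor the calls to \Cref{prop:subtime-size} inside an answer step, ever touch $\Gold$, so $\Gold$ is indeed monotonically shrinking across the batch.

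Granted these two structural facts, I would obtain edge-disjointness in one line: fix $i < j$ and observe that every edge of $M_i$ has been removed from $\Gold$ immediately after $M_i$ was produced, so it cannot be present in $\Gold$ at the later time $M_j$ is computed; since $M_j$ is a matching inside $\Gold$ at that later time, $M_i \cap M_j = \emptyset$ as required.

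For the maximum-degree bound, fix $i \in [\rho]$. Since $\Gold$ only loses edges over time, every edge appearing in any of the later matchings $M_{i+1}, \ldots, M_\rho$ must already have been present in $\Gold$ at the moment $M_i$ was computed (each of those later matchings is itself a subgraph of $\Gold$ at its own, later time, hence also of $\Gold$ at the earlier time). Consequently, the subgraph on edge set $M_i \cup M_{i+1} \cup \ldots \cup M_\rho$, restricted to the vertex set $V(M_i)$, is a subgraph of $\Gold[V(M_i)]$ at the time $M_i$ was computed, and its maximum degree is therefore at most $\indeg{M_i}$ in $\Gold$ at that time, as claimed.

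No step of this argument is genuinely subtle: once the decremental nature of $\Gold$ under \Cref{alg:recursive} is verified, the two conclusions transfer mechanically from the base case. The only place that is even mildly worth pausing on is the line-by-line check that none of the newly added components of \Cref{alg:recursive} --- the recursive $\alg_k$ calls on $\Gbatch$ and $\Gmatch$, or the auxiliary \Cref{prop:subtime-size} queries --- write into $\Gold$; this is precisely why the author remarks that the proof is verbatim from the base case.
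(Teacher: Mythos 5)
Your proof is correct and follows the same route the paper takes; the paper simply declares the proof of this claim to be verbatim that of \Cref{clm:base-case-ORS}, and your argument reproduces exactly that reasoning. The one genuinely useful thing you add is the explicit check that the new components of \Cref{alg:recursive} (the recursive $\alg_k$ instances on $\Gbatch$, $\Gmatch$ and the \Cref{prop:subtime-size} calls) never insert edges into $\Gold$, which is precisely what justifies the ``verbatim'' transfer.
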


The last part is then to bound the runtime of the algorithm. 

\begin{claim}\label{clm:recursive-runtime}
	With high probability, when running~\Cref{alg:recursive} on a single batch of $t$ chunks: 
	\begin{enumerate}
		\item the total time spent for maintaining the graphs and bookkeeping is 
		\[
		O(t \cdot \alpha \cdot n);
		\] 
		\item the total time spent for running~\Cref{prop:subtime-size} in Lines~\eqref{line:recursive-batch} and~\eqref{line:recursive-match} is
		\[
		O(t \cdot q \cdot n \cdot \poly(\log{(n)}/\delta));
		\]
		\item the total time spent computing $\Mbatch$ in Line~\eqref{line:recursive-batch} is
		\[
		O\Paren{t \cdot \alpha \cdot n \cdot (2q)^{k-1} \cdot \Paren{t \cdot q \cdot \alpha}^{1/(k+1)} \cdot \ORS{n}{\gamma \cdot \delta n/2}^{1-1/(k+1)} \cdot n^{6\gamma} \cdot \log^{2}{\!(n)}};
		\]
		\item the total time spent computing $\Mmatch$ in Line~\eqref{line:recursive-match} is at most
		\[
			O\Paren{t \cdot (q+1) \cdot \alpha \cdot n \cdot (2q)^{k-1} \Paren{t \cdot q \cdot \alpha}^{1/(k+1)} \cdot \ORS{n}{\gamma \cdot \delta n/2}^{1-1/(k+1)} \cdot n^{6\gamma} \cdot \log^{2}{\!(n)}};
		\] 
		\item the total time spent computing $\Mold$ in Line~\eqref{line:recursive-old} is at most
		\[
		O\Paren{m \cdot \ORS{n}{\gamma \cdot \delta n/2} \cdot  n^{6\gamma} \cdot \log^2\!{(n)}}.
		\]
	\end{enumerate}
\end{claim}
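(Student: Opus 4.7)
The plan is to verify the five items one at a time, leveraging the induction hypothesis of~\Cref{lem:recursive} for the two recursive instances of $\alg_k$ on $\Gbatch$ and $\Gmatch$, and re-running the ORS-based accounting from~\Cref{clm:base-case-runtime} for the $\Mold$ bound.

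Items 1 and 2 are direct. By the $O(1)$-expected-time representation of~\Cref{sec:notation}, the outer algorithm processes its $t$ chunks of $\alpha n$ updates in $O(t\alpha n)$ total time; relaying these updates (and the matchings moved in Line~\eqref{line:recursive-old}) into the two recursive instances is $O(1)$ per event and is absorbed into items 3 and 4. For item 2, each of the $tq$ queries invokes~\Cref{prop:subtime-size} twice with $\eps = \delta_k/2 = \Theta(\delta)$, at $\Ot(n \cdot \poly(1/\delta))$ per call.

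For items 3 and 4 I invoke the induction hypothesis of~\Cref{lem:recursive} on the two recursive instances of $\alg_k$. The preconditions of~\Cref{prob:key} with parameters $(n_k,m_k,q_k,\gamma_k,\delta_k,\alpha_k)$ have been verified in~\Cref{clm:recursive-calls-correct}, so the induction hypothesis gives an amortized per-update time of
\[
O\!\left((2q)^{k-1} \cdot (m_k/n_k)^{1/(k+1)} \cdot \ORS{n}{\gamma_k \delta_k n/2}^{1-1/(k+1)} \cdot n^{6\gamma_k} \cdot (\log n/\delta_k)^{c}\right)
\]
for each instance. Substituting $m_k/n_k = tq\alpha$ and $\gamma_k \delta_k = \gamma \delta$ (so the inner ORS term coincides with the claimed $\ORS{n}{\gamma\delta n/2}$), and multiplying by the total update count delivered to $\Gbatch$, which is $t\alpha n$, yields item 3; multiplying instead by the count delivered to $\Gmatch$, which is at most $t(q+1)\alpha n$ by~\Cref{clm:recursive-calls-correct} (the original $t$ chunks plus one $\Mold$-insertion chunk per query), yields item 4.

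Item 5 parallels the corresponding part of~\Cref{clm:base-case-runtime}: by~\Cref{clm:recursive-correct} each call to~\Cref{alg:subtime} in Line~\eqref{line:recursive-old} runs on an input satisfying $\mu(\Gold[U]) \geq \delta n/2$ with parameter $2\gamma$, so~\Cref{thm:subtime} bounds its cost by $O(m \cdot n^{6\gamma} \cdot \log n / \indeg{M_i})$ and certifies that each $M_i$ has size $\gamma \delta n$. Feeding the ordered family $M_1,\ldots,M_\rho$ of~\Cref{clm:recursive-ORS} into~\Cref{lem:ORS} with $\eta = 1/2$ gives $\sum_i 1/\indeg{M_i} \leq O(\log n) \cdot \ORS{n}{\gamma\delta n/2}$, which is exactly item 5. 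The main subtlety throughout is keeping the parameter bookkeeping straight across the recursion, namely amortizing over the update count delivered to the recursive instance rather than the outer one, and verifying that $\gamma_k = 12\gamma < (1/12)^k$ is implied by the outer precondition $\gamma < (1/12)^{k+1}$; the $\poly(n)$ union bounds required by~\Cref{thm:subtime} and~\Cref{prop:subtime-size} combine cleanly with those already paid for in the base case.
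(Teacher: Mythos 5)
Your proposal is correct and follows essentially the same route as the paper: items 1 and 2 are immediate, items 3 and 4 come from plugging $m_k = tq\alpha n$ into the induction hypothesis for the two recursive instances and multiplying by the respective total update counts $t\alpha n$ and $t(q+1)\alpha n$, and item 5 re-runs the $\Mold$ accounting via~\Cref{clm:recursive-ORS},~\Cref{thm:subtime} with parameter $2\gamma$, and~\Cref{lem:ORS} with $\eta=1/2$. One small bookkeeping remark shared with the paper's own terse proof: the induction hypothesis delivers $n^{6\gamma_k}\cdot(\log n/\delta_k)^c$ rather than the $n^{6\gamma}\cdot\log^2 n$ written in items 3 and 4; since $\gamma_k=12\gamma$ and $\delta_k=\delta/12$ this is only off by a constant in the exponent and a $\poly(\log n/\delta)$ factor, and the proof of~\Cref{lem:recursive} in fact absorbs it into $(\log n/\delta)^c$ anyway, so the discrepancy is a presentational slip in the claim statement rather than a flaw in your argument.
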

\begin{proof}
	The proof just follows the same argument as in~\Cref{clm:base-case-runtime}. 
	
	Specifically, the first part follows immediately, and the second part is by~\Cref{prop:subtime-size}. 
	For parts three and four, plugging the choice of $m_k = t \cdot q \cdot \alpha \cdot n$ when applying the induction hypothesis of~\Cref{lem:recursive} for $\alg_k$ on $\Gbatch$ and $\Gmatch$, implies the bounds. 
	
	Finally, the last part holds by~\Cref{clm:recursive-ORS} and~\Cref{lem:ORS} exactly as in~\Cref{clm:base-case-runtime} as here also, each matching $\Mold$ is of size $\gamma \cdot \delta n$ and is chosen from $\Gold$ which 
	is a decremental graph throughout the batch. 
\end{proof}

\begin{proof}[Proof of~\Cref{lem:recursive}]
	The correctness of the algorithm follows from~\Cref{clm:recursive-correct} and a union bound over $\poly(n)$ intermediate graphs created in~\Cref{prob:key} (by the assumption on number of updates).  
	
	Furthermore, the runtime per each of $t \cdot \alpha \cdot n$ updates during a batch, by~\Cref{clm:recursive-runtime}, is at most 
	\begin{align*}
		&\hspace{23pt} O(t \cdot \alpha \cdot n)  \\
		&\hspace{10pt} + O(t \cdot q \cdot n \cdot \poly(\log{(n)}/\delta))  \\
		&\hspace{10pt} +O\Paren{t \cdot \alpha \cdot n \cdot (2q)^{k-1} \cdot \Paren{t \cdot q \cdot \alpha}^{1/(k+1)} \cdot \ORS{n}{\gamma \cdot \delta n/2}^{1-1/(k+1)} \cdot n^{6\gamma} \cdot \log^{2}{\!(n)}}  \\
		&\hspace{10pt} +O\Paren{t \cdot (q+1) \cdot \alpha \cdot n \cdot (2q)^{k-1} \Paren{t \cdot q \cdot \alpha}^{1/(k+1)} \cdot \ORS{n}{\gamma \cdot \delta n/2}^{1-1/(k+1)} \cdot n^{6\gamma} \cdot \log^{2}{\!(n)}}  \\
		&\hspace{10pt} + O\Paren{m  \cdot \ORS{n}{\gamma \cdot \delta n/2} \cdot n^{6\gamma} \cdot \log^2\!{(n)}} \\
		&\hspace{10pt}= O\Paren{t \cdot (2q) \cdot \alpha \cdot n \cdot (2q)^{k-1} \Paren{t \cdot (2q) \cdot \alpha}^{1/(k+1)} \cdot \ORS{n}{\gamma \cdot \delta n/2}^{1-1/(k+1)} \cdot n^{6\gamma} \cdot (\log(n)/\delta)^{c}} \\
		&\hspace{10pt} + O\Paren{m \cdot \ORS{n}{\gamma \cdot \delta n/2} \cdot n^{6\gamma} \cdot (\log(n)/\delta)^{c}},
	\end{align*}
	where in the equality, we used several loose upper bounds (to simplify the subsequent calculations) and use $c$ as the absolute constant which is equal to the exponent of the $\poly$-term in~\Cref{prop:subtime-size} (we also take $c > 2$ to  
	subsume the $\log^{2}(n)$ term of prior equations). 
	
	We can now balance these terms by setting 
	\[
		t :=  \paren{\frac{m}{n}}^{k+1/(k+2)} \cdot \ORS{n}{\gamma \cdot \delta n/2}^{1/(k+2)} \cdot \frac{1}{(2q)^{\frac{(k-1) \cdot (k+1) + (k+2)}{k+2}} \cdot \alpha}, 
	\]
	which leads to the amortized update time of
	\begin{align*}
		&O\Paren{m \cdot \ORS{n}{\gamma \cdot \delta n/2} \cdot n^{6\gamma} \cdot (\log(n)/\delta)^{c} \cdot \frac{1}{t \cdot \alpha \cdot n}} \\
		&\hspace{1cm}= O\Paren{ \paren{\frac{m}{n}}^{1/(k+2)} \cdot \ORS{n}{\gamma \cdot \delta n/2}^{1-1/(k+2)} \cdot (2q)^{k} \cdot n^{6\gamma} \cdot (\log(n)/\delta)^{c}}, 
	\end{align*}
	where we used $(k-1) \cdot (k+1) + (k+2) = k^2 + k + 1 \leq k \cdot (k+2)$ for $k \geq 1$.
	
	Similar to the proof of~\Cref{lem:base-case}, we should also handle the case wherein the total number of chunks given to the algorithm 
	does not even reach a single batch. As before, in this case, given the promise that the graph $G$ starts empty, the only graph that is non-empty will be $\Gbatch$, 
	and thus the amortized runtime of the algorithm is the same as $\alg_k$ on $\Gbatch$ (with the given parameters, in particular $m_k = t \cdot q \cdot \alpha \cdot n$). 
	Thus, the amortized runtime of~\Cref{alg:recursive} will be at most (by~\Cref{clm:recursive-runtime} for bookkeeping, running~\Cref{prop:subtime-size}\footnote{In fact, for the first batch, we do not even need to run~\Cref{prop:subtime-size}, 
	given that we know $\mu(G[U]) \geq \delta n$ (by the promise of~\Cref{prob:key} and since we know $\Gold = \Gmatch=\emptyset$. We ignore this extra optimization step since it does not affect the overall runtime of the algorithm.}%
	and running $\alg_k$ on $\Gbatch$), 
	\begin{align*}
		&O\Paren{(2q)^{k-1} \cdot \Paren{t \cdot q \cdot \alpha}^{1/(k+1)} \cdot \ORS{n}{\gamma \cdot \delta n/2}^{1-1/(k+1)} \cdot n^{6\gamma} \cdot (\log(n)/\delta)^{c}} \\
		&\hspace{1cm}= O\Paren{ \paren{\frac{m}{n}}^{1/(k+2)} \cdot \ORS{n}{\gamma \cdot \delta n/2}^{1-1/(k+2)} \cdot (2q)^{k-1} \cdot n^{6\gamma} \cdot (\log(n)/\delta)^{c}}, 
	\end{align*}
	by the choice of $t$ (using the same exact calculation and the above step). 
	
	 This proves the induction step of~\Cref{lem:recursive} and concludes the proof. 
\end{proof}

% !TeX root = main.tex 
%!TEX root = main.tex

\section{A Fully Dynamic Algorithm for Maximum Matching}\label{sec:dynamic}

The following theorem, which is the main contribution of our work, formalizes~\Cref{res:main}. 

\begin{theorem}\label{thm:main}
	Let $\eps \in (0,1/100)$ be a given parameter and $k \geq 1$ be any integer. Let $\gamma = (1/20)^{k}$ and $f(\gamma,\eps/4)$ and $g(\gamma,\eps/4)$ be as defined in~\Cref{prop:boosting}. 
	
	There exists an algorithm for maintaining a $(1-\eps)$-approximation to maximum matching
	in a fully dynamic $n$-vertex graph that starts empty with amortized update time of
	\[
			O\Paren{{n}^{1/k+1} \cdot  \ORS{n\,}{\,\frac{1}{15^k} \cdot f(\gamma,\Theta(\eps)^2) \cdot n}^{1-1/(k+1)} \cdot n^{15/(20)^{k}}}.
	\]	
	The guarantees of this algorithm hold with high probability even against an adaptive adversary. 
\end{theorem}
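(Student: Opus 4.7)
The plan is to chain together the three main ingredients developed earlier: Proposition~\ref{prop:additive} reduces $(1-\eps)$-multiplicative approximation to additive $\eps' n$-approximation with $\eps' = \Theta(\eps^2)$ at a cost of a $\poly(\log n/\eps)$ factor in the amortized update time; Proposition~\ref{prop:boosting} reduces the latter to repeatedly answering, once every $\Theta(\eps' n)$ updates, a batch of $q := g(\gamma,\eps')$ adaptively chosen subset queries by invoking a weak oracle that returns a $\gamma$-approximation on any $U\subseteq V$ with $\mu(G[U]) \geq f(\gamma,\eps')\cdot n$; and Lemma~\ref{lem:recursive} is precisely such an oracle, since this is exactly the specification of~\Cref{prob:key}.

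First I would compose Propositions~\ref{prop:additive} and~\ref{prop:boosting} to set up the reduction interface, identifying the parameters of~\Cref{prob:key} as $\delta := f(\gamma,\eps')$, $q:=g(\gamma,\eps')$, and $\alpha := \Theta(\delta)$ chosen just large enough so that $\alpha \geq \gamma\cdot\delta$ and the chunk length $\alpha n$ matches the $\Theta(\eps' n)$ rate at which recomputations are needed. The graph $G$ is promised to start empty, which propagates through the vertex-contracted graph maintained by Proposition~\ref{prop:additive} to the graph handed to~\Cref{prob:key}, so the ``starts empty'' precondition of~\Cref{lem:recursive} is met; for the parameter $m$ we simply use the trivial upper bound $m \leq n^2$.

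Then I would apply Lemma~\ref{lem:recursive} with the fixed choice $\gamma = (1/20)^k$, which satisfies the required $\gamma < (1/12)^k$. Its amortized update time becomes
\[
O\Paren{(2q)^{k-1} \cdot n^{1/(k+1)} \cdot \ORS{n}{\gamma\cdot\delta\cdot n/2}^{1-1/(k+1)} \cdot n^{6\gamma} \cdot (\log n/\delta)^{c}}.
\]
Since $q$ and $\delta$ depend only on $\gamma$ and $\eps$ and not on $n$, the factors $(2q)^{k-1}$, $(\log n/\delta)^c$, and the $\poly(\log n/\eps)$ overhead from Proposition~\ref{prop:additive} are all sub-polynomial in $n$, and for sufficiently large $n$ are absorbed into $n^{(15-6)/20^k} = n^{9/20^k}$; combining this with $n^{6\gamma} = n^{6/20^k}$ yields the claimed $n^{15/20^k}$ factor. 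For the $\ORS$ argument, by choosing the hidden constant inside $\eps' = \Theta(\eps^2)$ small enough (and using that $f(\gamma,\cdot)$ is monotone in its second argument) one can arrange $\gamma\cdot\delta/2 \geq (1/15^k)\cdot f(\gamma,\Theta(\eps)^2)$, after which the monotonicity of $\ORS{n}{\cdot}$ in its second argument rewrites the bound in the form stated in the theorem.

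The main obstacle is purely bookkeeping: verifying that the high-probability and adaptive-adversary guarantees survive the composition of all three reductions (each is already adaptive-adversary-robust in isolation), checking that the chunk/query structure required by~\Cref{prob:key} lines up exactly with what the boosting framework needs on top of the vertex-sparsified graph, and confirming that the algebra of substituting $\gamma = (1/20)^k$, $m = n^2$, $\delta = f(\gamma,\eps')$, $q = g(\gamma,\eps')$ into the bound of Lemma~\ref{lem:recursive} closes up cleanly to produce both the exponent $15/20^k$ and the stated $\ORS$ argument $(1/15^k)\cdot f(\gamma,\Theta(\eps)^2)\cdot n$.
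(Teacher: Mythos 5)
Your proposal is essentially the paper's own proof: compose Proposition~\ref{prop:additive} (vertex sparsification, multiplicative $\to$ additive), Proposition~\ref{prop:boosting} (additive $\to$ adaptive weak-oracle queries), and Lemma~\ref{lem:recursive} as the oracle for \Cref{prob:key}, with the trivial bound $m \leq \binom{n}{2}$ and constants chosen so the $(2q)^{k-1}$, $(\log n/\delta)^c$, $n^{6\gamma}$, and $\poly(\log n/\eps)$ factors collapse into $n^{15/20^k}$; that is exactly what the paper does, modulo two small bookkeeping points you glossed over. First, the paper runs $\alg_k$ with a slightly larger approximation parameter $\gamma_k = (1/15)^k > (1/20)^k$ (while using $\gamma = (1/20)^k$ only as the argument to $f, g$), which is where the $1/15^k$ in the \textsf{ORS} argument actually comes from—your suggestion to get it via monotonicity of $f$ in $\eps$ is plausible but pushes in a fuzzier direction; and your ``$\alpha = \Theta(\delta)$'' is inconsistent with ``$\alpha n = \Theta(\eps' n)$'' (the paper takes $\alpha = \eps^2$; anything with $\gamma\delta \leq \alpha \lesssim \eps'$ works since the bound of Lemma~\ref{lem:recursive} absorbs the $\alpha$-dependence via $\alpha \geq \gamma\delta$). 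Second, \Cref{prob:key} assumes only $\poly(n)$ updates, so the paper adds a periodic restart (rebuild from scratch every $n^{10}$ updates) to handle arbitrarily long update sequences; this is a standard but nonempty step your sketch omits.
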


%Note that by taking the limit of $k \rightarrow \infty$, we obtain~\Cref{res:main}. 

\begin{proof}[Proof of~\Cref{thm:main}]
	The proof is a combination of the standard tools listed in~\Cref{sec:tools} to reduce the problem to~\Cref{prob:key} and 
	then applying~\Cref{lem:recursive} for solving this problem. 
	
	We start by obtaining an algorithm for an \emph{additive} $\eps \cdot n$ approximation to maximum matching. 
	Define the following parameters for solving~\Cref{prob:key}
	via the algorithm $\alg_k(n_k,m_k,q_k,\gamma_k,\delta_k,\alpha_k)$: 
	\begin{alignat}{3}
	&n_k := n \qquad &&m_k := \binom{n}{2} \qquad &&q_k : = g(\gamma,\eps/4) \notag \\
	&\gamma_k := (1/15)^k \qquad &&\delta_k := f(\gamma,\eps/4) \qquad &&\alpha_k := \eps^2.  \label{eq:dynamic-parameters} 
	\end{alignat}
	Suppose we have computed a $(\eps/4) \cdot n$ additive approximate matching at some time. Then, for the next $\alpha_k \cdot n = \eps^2 \cdot n$ this remains at least a $(\eps/2) \cdot n$ approximation (even if all updates delete edges of this matching). 
	At this point, we should recompute another $(\eps/4) \cdot n$ additive approximate matching. By~\Cref{prop:boosting}, at this point, we need to answer $q_k := g(\gamma,\eps/4)$ queries $U \subseteq V$ 
	with $\mu(G[U]) \geq \delta_kn = f(\gamma,\eps/4) \cdot n$ and returning a matching of size $\mu(G[U]) \geq \gamma_k \cdot \delta_k \cdot n$ satisfies the requirement of answering the queries in~\Cref{prop:boosting}. 
	However, we do need to ensure that $\mu(G[U]) \geq \delta_k \cdot n$ which can be done by running~\Cref{prop:subtime-size} first. At this point, the problem we need to solve to implement~\Cref{prop:boosting} 
	is exactly~\Cref{prob:key} with the parameters of~\Cref{eq:dynamic-parameters}.  
	
	Running~$\alg_k$ for solving~\Cref{prob:key} by~\Cref{lem:recursive} is going to have an amortized update time of 
	\begin{align*}
		&O\Paren{(2q_k)^{k-1} \cdot \paren{\frac{m_k}{n_k}}^{1/k+1} \cdot  \ORS{n_k}{\gamma_k \cdot \delta_k \cdot n/2}^{1-1/(k+1)} \cdot n^{6\gamma_k} \cdot (\log{(n)}/\delta_k)^{c}} \\
		&\hspace{10pt}=O\Paren{{n}^{1/k+1} \cdot  \ORS{n\,}{\,\frac{1}{20^k} \cdot f(\gamma,\eps/4) \cdot n}^{1-1/(k+1)} \cdot n^{10/(20)^k}}, 
	\end{align*}
	with high probability; here, we used that $(2g(\gamma,\eps/4))^{k-1} \cdot n^{6/(20)^k} \cdot (\log{(n)}/\delta_k)^{c} = O(n^{10/(20)^k})$ given the values $f(\gamma,\eps),g(\gamma,\eps),\delta_k = O_{k,\eps}(1)$. 
	Also,  the runtime of $O(n \cdot \poly(\log{(n)}/\delta_k))$ for running~\Cref{prop:subtime-size} amortized over the $\eps^2 \cdot n$ updates is asymptotically upper bounded by the above and can be neglected. 
	All in all, this implies an algorithm with amortized update time of
	\[
		O\Paren{{n}^{1/k+1} \cdot  \ORS{n\,}{\,\frac{1}{20^k} \cdot f(\gamma,\eps/4) \cdot n}^{1-1/(k+1)} \cdot n^{10/(20)^k}},
	\]
	for maintaining an additive $\eps n$ approximate matching in a dynamic graph, with high probability. 
		
	Finally, we apply~\Cref{prop:additive} to turn this into a multiplicative $(1-\eps)$-approximation guarantee. 
	This effectively requires re-parameterizing $\eps$ with $\Theta(\eps^2)$ in the above bounds (and bounding $n^{10/(20)^k} \cdot \poly(\log{(n)}/\eps) = O(n^{15/(20)^k})$ by the range of parameters). 
	This implies an algorithm with amortized update time promised in the theorem statement 
	for maintaining a (multiplicative) $(1-\eps)$-approximation to maximum matching in a fully dynamic graph, with high probability. 
	
	We shall remark that in the arguments above---in particular, to satisfy the guarantee promised in~\Cref{prob:key}---we need to assume that the total number
	of updates is $\poly(n)$ to be able to apply union bound in conjunction with our high probability guarantees. This however can be easily fixed using a standard trick\footnote{In fact, it appears that
	many of existing dynamic matching algorithms make this assumption implicitly, e.g., in~\cite{BhattacharyaKS23,Liu24,BehnezhadG24}, although some are also more explicit about this, e.g.~\cite{BehnezhadK22}.}
	as we explain next.
	
	After every, say, $n^{10}$ updates to the underlying graph $G$, we entirely terminate the current run of the algorithm and erase all the data structures. Then, we start a new run of the algorithm on an initially empty graph $H$ and insert the current edges in $G$ 
	to this new graph $H$ using at most $O(n^2)$ insertions. After this step, the graph $H$ becomes the same as $G$ and we continue with processing the upcoming updates the current data structures we have. This does not change the 
	asymptotic runtime of the algorithm, but now ensures that even if the algorithm needs to process super-polynomial number of updates, after \emph{each} update, with high probability, the output is correct
	and the amortized runtime of the algorithm is as desired\footnote{We shall emphasize that this does not mean on such a long sequence, the guarantees are satisfied \emph{throughout} (as we simply cannot do a union bound over 
	so many events). However, even an adaptive adversary cannot make sure that a \emph{fixed} update returns a wrong answer or take longer than guaranteed except with negligible probability.}. This concludes the proof. 
\end{proof}

\subsection{Removing the Assumption on the Prior Knowledge of ORS} 

{In the description of our algorithms throughout this paper, we assumed that the algorithm is aware of the value of $\ORS{n}{c \cdot n}$ (for a proper choice of $c$ in the algorithm) to find the right balancing point for size of batches. However, as we explain next, 
this knowledge is actually not necessary, which is a desirable feature given our current state of (lack of) understanding of $\ORS{n}{c \cdot n}$.

We again focus on the case of polynomially many updates. The algorithm starts with \emph{guessing} that $\ORS{n}{c \cdot n}$ is some $\beta =n^{\Omega(1/\log\log{n})}$ (the current best lower bound in~\Cref{eq:ORS}), runs the algorithm of~\Cref{thm:main} with $\beta$ as the value of ORS, and continues as long as the runtime does not exceed the bounds dictated by the current guess $\beta$. Now suppose during some run of the algorithm, the runtime exceeds the current update time bound. Then there are two possibilities: either the high probability event of~\Cref{thm:main} has failed or the guess $\beta$ of  $\ORS{n}{c \cdot n}$ falls shorts of the true value. The first case happens with a negligible probability which we can ignore 
so let us focus on the second case. 

The guarantees given in~\Cref{lem:recursive} hold for each fixed batch of updates (in other words, we amortize the runtime over a single batch and not beyond that). 
Suppose we have a batch with longer than expected runtime. We consider all the matchings moved from $\Gold$ to $\Gmatch$ 
throughout this batch and apply~\Cref{lem:ORS} to them in an algorithmic fashion---which takes linear time in the size of the graph---to explicitly construct an ORS graph. If 
we succeed in creating an ORS graph with strictly more matchings than the current estimate $\beta$,
we terminate the current algorithm, increase the value of $\beta$ by a factor of $4$, and restart the process from the beginning of the last batch. Since for any $k \ge 1$, the target runtime of the algorithm $\alg_k$ is proportional to $\ORS{n}{c \, n}^{\eta}$ for some $\eta \ge 1/2$, running the algorithm again with $\beta$ increased by a factor of $4$, results in a geometrically increasing sequence of runtimes. This makes the final runtime only a constant factor larger than if the algorithm had been run with the correct value of $\ORS{n}{c \cdot n}$ from the very beginning. We also note that the value of $\beta$ only monotonically increases over the execution of the algorithm because at every occurrence, we recover a certificate of a new lower bound on the value of $\ORS{n}{c \cdot n}$. So the revision of parameter $\beta$ occurs only $O(\log n)$ times over the {\em entire} execution of the algorithm.

%%It is however possible that the algorithm has failed simply due to a low probability event. In that case, after running our ``test'' according to~\Cref{lem:ORS}, we do not end up updating the value of $\beta$ as above, and instead restart the execution of the algorithm from the beginning of this batch but keep the value of $\beta$ unchanged. Since this event occurs with polynomially small probability, the amortized update time (in expectation) remains unchanged. 

In summary, we are able to recover the bounds of~\Cref{thm:main} without a prior knowledge of the value of $\ORS{n}{c \cdot n}$ as also advertised in~\Cref{res:main}.
}

\section*{Acknowledgement} 
\addcontentsline{toc}{section}{Acknowledgement}

We would like to thank Soheil Behnezhad and Alma Ghafari for helpful discussions on their results in~\cite{BehnezhadG24}, and 
Aaron Bernstein, Sayan Bhattacharya, and Thatchaphol Saranurak for their shared discussions and insights on this problem. 

Part of this work was conducted while the first named author was visiting the Simons Institute for the Theory of Computing as part of the Sublinear Algorithms program. 

\clearpage

\bibliographystyle{halpha-abbrv}
\bibliography{general}

%%
%%
%%\clearpage
%%\appendix
%%
%%\part*{Appendix}

%%\input{appendix}

\end{document}